\documentclass[11pt,reqno]{amsart}

\usepackage[letterpaper,margin=1in]{geometry}
\usepackage[T1]{fontenc}
\usepackage[utf8]{inputenc}
\usepackage{lmodern}
\usepackage{microtype}
\usepackage{amsmath,amssymb,amsthm}
\usepackage{booktabs}
\usepackage{array}
\usepackage{tabularx}
\usepackage{ragged2e}
\usepackage{enumitem}
\usepackage[table]{xcolor}
\usepackage{float}
\usepackage[hidelinks,breaklinks=true,colorlinks=true,linkcolor=black,citecolor=blue!50!black,urlcolor=blue!50!black]{hyperref}
\usepackage{setspace}
\usepackage{titlesec}
\usepackage{abstract}
\usepackage[authoryear,round]{natbib}

\definecolor{risktop}{RGB}{244,204,204}
\definecolor{riskmid}{RGB}{252,229,205}
\definecolor{risklow}{RGB}{217,234,211}
\definecolor{riskwatch}{RGB}{255,242,204}
\definecolor{headblue}{RGB}{217,226,243}

\newtheorem{proposition}{Proposition}
\newtheorem{assumption}{Assumption}
\newtheorem{corollary}{Corollary}
\theoremstyle{remark}
\newtheorem*{remark}{Remark}
\newtheorem*{definition}{Definition}

\titleformat{\section}{\normalfont\Large\bfseries}{\thesection.}{0.6em}{}
\titleformat{\subsection}{\normalfont\large\bfseries}{\thesubsection}{0.6em}{}
\titleformat{\subsubsection}{\normalfont\normalsize\bfseries}{\thesubsubsection}{0.5em}{}

\newcommand{\obs}[1]{\textit{Observed fact.} #1}

\newcommand{\interp}[1]{\textit{Model-consistent interpretation.} #1}
\newcommand{\hyp}[1]{\textit{Hypothesis for future testing.} #1}

\makeatletter
\def\@setauthors{%
  \begingroup
  \def\thanks{\protect\thanks@warning}%
  \trivlist
  \centering\@topsep30\p@\relax
  \advance\@topsep by -\baselineskip
  \item\relax
  \author@andify\authors
  \def\\{\protect\linebreak}%
  \normalsize\authors
  \ifx\@empty\contribs
  \else
    ,\penalty-3 \space \@setcontribs
    \@closetoccontribs
  \fi
  \endtrivlist
  \endgroup
}
\makeatother

\title[The Inference Bottleneck: A Formal Model]{The Inference Bottleneck: \\ A Formal Model of Vertical Foreclosure in AI Markets}
\author{Gaston Besanson \\ \normalfont\normalsize Universidad Torcuato Di Tella \\ \normalfont\normalsize April 2026 \\ \normalfont\normalsize\itshape Working paper}
\address{}
\date{}

\begin{document}
\maketitle

\begin{abstract}
\noindent As generative AI commercializes, competitive advantage is shifting from one-time model training toward continuous inference, distribution, and routing. This paper develops a formal game-theoretic model of vertical foreclosure in inference markets, as the formal-model companion to the policy framework in \citet{BesansonCelani2026}. The model isolates two foreclosure mechanisms that operate without predatory pricing: quality-of-service (QoS) discrimination against downstream rivals via latency, throughput, context limits, or feature access; and routing bias in assistant-layer interfaces. An extension motivated by Anthropic's April 2026 release of Claude Opus 4.7 alongside the restricted-access Claude Mythos Preview under Project Glasswing introduces a third mechanism, \emph{tier-based access discrimination}, parameterized by a tier gap $\tau$ and partner-exclusivity $\kappa$.

\smallskip
The main theoretical result provides an explicit local equilibrium characterization of the QoS gap $q_U^* - q_i^*$. Under logit demand and symmetric rivals, the gap is strictly increasing in the importance of inference quality ($\alpha$), strictly increasing in downstream margins ($m_U$), strictly decreasing in the API access price ($p$), and strictly decreasing in rival entry elasticity ($\eta$). Discrimination vanishes at a joint boundary in $(m_U, p, \eta)$ space rather than at a simple threshold in $\alpha$ alone, refining the framework used in earlier working drafts.

\smallskip
The paper illustrates the model through a stylized calibration to four providers using public adoption data as of April 2026. The calibration is presented with explicit methodology and sensitivity analysis: parameter values are treated as interpretive inputs to a comparative risk mapping, not structural estimates. Under baseline assumptions, the mapping suggests Google and OpenAI face conditions most conducive to foreclosure, Microsoft's realized routing bias has been voluntarily constrained by a March 2026 multi-model pivot, and Anthropic's position is split: low risk on consumer channels, elevated risk in the enterprise coding-agent segment where Claude Code now competes with rivals built on the same API.

\smallskip
The policy section proposes \emph{Neutral Inference}, a four-pillar conduct framework extending the three pillars in the companion paper: QoS parity, routing transparency, FRAND-style non-discrimination, and tier transparency with release-pathway discipline. Illustrative welfare calculations under a range of parameter scenarios suggest conduct-based intervention could generate net welfare gains in the tens of billions of dollars annually, with the range reflecting the underlying calibration uncertainty rather than a point estimate.

\smallskip
\noindent\textbf{Keywords:} vertical foreclosure; quality discrimination; AI infrastructure; platform competition; self-preferencing; raising rivals' costs; tiered access; safety carve-outs.

\noindent\textbf{JEL codes:} L13, L41, L86, K21.
\end{abstract}

\section{Introduction}

Antitrust economics has long grappled with a fundamental question: when does control over an essential input become the power to pick winners in adjacent markets? The question has traveled through different technological eras, from operating systems (\emph{United States v.~Microsoft}) to search engines (\emph{Google Shopping}) to mobile platforms (\emph{Epic v.~Apple}), but the underlying economic logic remains constant. Vertically integrated firms controlling bottleneck infrastructure may profitably foreclose downstream rivals through discriminatory access terms, even absent predatory pricing.

The generative AI era presents a new instantiation of this classic problem. Firms like Google, OpenAI, Microsoft, and Anthropic control what \citet{BesansonCelani2026} term \emph{cognitive infrastructure}: scalable inference capabilities that downstream applications depend upon to function, delivered under technical and contractual terms that providers can adjust to favor their own competing services. The companion paper develops the legal and institutional argument. The present paper develops the formal economic model.

\paragraph{What this paper is, and is not.} This is a formal mechanism paper with stylized calibration. It is designed to identify the incentive conditions under which non-price foreclosure is profitable, to derive falsifiable empirical predictions from those conditions, and to discipline subsequent policy analysis by making the mapping from model primitives to regulatory observables explicit. It is not a structurally estimated quantitative industrial organization paper. The firm-level numerical outputs reported in Section~\ref{sec:calib} should be read as comparative risk mappings under stated assumptions rather than recovered deep parameters. The illustrative welfare figures in Section~\ref{sec:policy} should be read as orders of magnitude that discipline the scale of policy response, not as econometric point estimates. The model is built around a single dominant upstream provider; the April 2026 market includes three major enterprise API providers, and an explicit oligopolistic extension is left for future work.

\subsection*{Contributions}

The paper makes four distinct contributions relative to the companion paper \citep{BesansonCelani2026}, which develops the legal and institutional argument from a policy perspective.

\begin{enumerate}[leftmargin=*, topsep=2pt, itemsep=3pt]
\item \textbf{A formal model of inference-layer foreclosure without price discrimination.} The paper derives conditions under which a vertically integrated inference provider profitably degrades third-party QoS or biases routing, using a logit demand framework with endogenous downstream effort. Proposition~\ref{prop:gap} provides an explicit local equilibrium characterization of the QoS gap under stated assumptions, with comparative statics derived rather than asserted.
\item \textbf{Tier-based access discrimination as a modelable mechanism.} Section~\ref{sec:tier} distinguishes two distinct foreclosure dimensions. Conventional QoS discrimination operates within a shared model or service layer and generates rents that are eroded by ordinary competitive pressure. Tier-based access discrimination operates at the capability frontier, gates access to a distinct model class, generates more durable rents because it is not subject to competitive erosion in the baseline-model market, and has a distinct empirical signature (bimodal capability distributions across customer classes) requiring distinct audit methodology. Proposition~\ref{prop:tier} parameterizes this mechanism through a tier gap $\tau$ and partner-exclusivity $\kappa$ and establishes monotonic comparative statics. Pillar~4 in Section~\ref{sec:policy} operationalizes this mechanism as a regulatory obligation.
\item \textbf{A transparent calibration with explicit identification limits.} Section~\ref{sec:methodology} makes the calibration methodology explicit. It distinguishes observed, inferred, and judgment-based parameters; states baseline values for structural parameters held fixed across providers; and provides sensitivity analysis establishing which conclusions are robust to parameter perturbations. The resulting firm-level assessments are framed as comparative risk mappings, not structural estimates.
\item \textbf{A model-to-observables bridge that connects the formal apparatus to regulatory triggers.} Section~\ref{sec:observables} maps each latent model primitive to an empirical object a regulator could measure or audit. This bridge is the practical connective tissue between the model's abstract parameters and the designation criteria in Section~\ref{sec:policy}. It is intended to make the framework usable in enforcement practice, not only in academic commentary.
\end{enumerate}

\paragraph{Relation to the companion paper.} The companion paper \citep{BesansonCelani2026} makes the policy case for treating inference as cognitive infrastructure subject to conduct-based neutrality obligations. The present paper provides microfoundations for that case: the formal derivation of the QoS-gap mechanism (Proposition~\ref{prop:gap}), the identification of tier-gating as a distinct foreclosure mechanism requiring a distinct regulatory response (Proposition~\ref{prop:tier} and Pillar~4), the calibration methodology and sensitivity framework, and the model-to-observables mapping. Readers who accept the companion paper's policy framing can read this paper as the formal underpinning. Readers who are skeptical of those conclusions can read this paper on its own terms as a model of vertical foreclosure in a market with bottleneck inference inputs.

The paper proceeds as follows. Section~\ref{sec:lit} reviews related literature. Section~\ref{sec:model} develops the formal model and characterizes equilibrium. Section~\ref{sec:extensions} extends the model to dynamic foreclosure, routing bias, and tier-based access. Section~\ref{sec:methodology} presents calibration methodology and sensitivity analysis. Section~\ref{sec:calib} applies the framework to Google, OpenAI, Microsoft, and Anthropic using April 2026 data. Section~\ref{sec:predictions} derives testable predictions and sketches an empirical audit agenda. Section~\ref{sec:observables} maps model primitives to regulatory observables. Section~\ref{sec:policy} analyzes optimal policy. Section~\ref{sec:conclusion} concludes.

\section{Related Literature} \label{sec:lit}

\subsection{Vertical foreclosure and raising rivals' costs}

The foundations of vertical foreclosure theory trace to \citet{SalopScheffman1983, SalopScheffman1987}, who formalized how an upstream firm can profitably raise downstream rivals' costs without predatory pricing. \citet{OrdoverSalonerSalop1990} showed vertical integration can harm competition when the integrated firm supplies inputs to downstream competitors, and that foreclosure can be profitable even when the input is not strictly essential. \citet{ReyTirole2007} and \citet{NockeWhite2007} further refined conditions under which vertical foreclosure is privately profitable versus socially harmful. \citet{ChoiYi2000} and \citet{ChenRiordan2007} provide complementary treatments emphasizing how input-quality discrimination interacts with differentiated-product competition, a feature central to the inference-market application.

The present model builds on this tradition but adapts it to AI infrastructure's distinctive features: (i) quality is multi-dimensional (latency, reliability, context limits, feature access), (ii) discrimination is difficult to observe ex post, (iii) the same infrastructure supports both first-party and third-party applications simultaneously, and (iv) providers can discriminate not just on QoS of a shared model but on which model class rivals can access at all.

\subsection{Platform competition and self-preferencing}

Recent work on digital platforms extends vertical foreclosure theory to settings where platforms control access to users. \citet{ZhuLiu2018} provide empirical evidence from Amazon. \citet{HagiuTehWright2022} formalize when self-preferencing harms welfare. \citet{Motta2023} provides a comprehensive theoretical treatment showing self-preferencing can constitute foreclosure even when framed as product design. \citet{PadillaPetitSchrepel2022} and \citet{Bourreau2024} develop the policy economics of self-preferencing under the EU Digital Markets Act. \citet{Armstrong2006} and \citet{RochetTirole2003} provide the two-sided market foundations.

\subsection{AI markets, compute, and governance}

The economics literature on AI markets is nascent. \citet{Korinek2023} and \citet{AghionJonesJones2019} discuss broader economic consequences of frontier AI. \citet{VipraKorinek2023} analyze market concentration in the foundation model layer. \citet{NarechaniaTutt2024} argue for treating large AI models as infrastructure under a common-carrier framework. \citet{KhanPozen2024} and \citet{HuaBelfield2024} discuss governance of frontier releases. Recent OECD, FTC, and European Commission reports highlight vertical integration and self-preferencing as key risks in AI ecosystems. \citet{BesansonCelani2026} develop the legal and policy companion to the present paper, defining cognitive infrastructure through measurable reliance, vertical incentives, and discrimination capacity, and proposing the three-pillar Neutral Inference framework that this paper operationalizes and extends.

\section{The Model} \label{sec:model}

\subsection{Setup}

A two-layer market is modeled in which an upstream inference provider $U$ supplies API access to downstream application developers while also competing downstream with its own integrated application.

\paragraph{Scope of the single-provider setup.} The baseline model features one upstream provider $U$ rather than multiple upstream providers in competition. Three considerations justify this choice. First, for many enterprise customers, architectural lock-in, data-gravity effects, and model-specific tooling create localized single-provider dependence even within a multi-provider market, so the single-provider structure is a reasonable first-order description of the relevant bilateral relationship. Second, the single-provider model is the analytically natural first step for isolating the non-price foreclosure mechanism, in the same way that textbook vertical foreclosure models abstract from upstream competition to characterize the downstream distortion. Third, the April 2026 market with three major enterprise inference providers (Anthropic, OpenAI, Google) warrants explicit oligopolistic extension, but such an extension involves distinct analytical questions (tacit collusion on QoS, coordinated tier-gating) that are beyond the scope of this paper and are noted as future work in Section~\ref{sec:conclusion}.

\paragraph{Operational caveat: middleware and orchestration layers.} A qualification on the single-provider setup is worth stating explicitly. Enterprise AI architectures are increasingly moving toward orchestration layers, routing gateways, and abstraction layers (LangChain, LiteLLM, internal routing middleware, model-agnostic SDKs) that may lower effective switching costs over time by facilitating multi-homing and upstream substitution. The model applies most strongly where provider-specific performance, tooling, or integration choices still create meaningful localized lock-in, for example when a downstream product is built around a specific model's feature set, context window, or fine-tuning pipeline. Where orchestration layers are well-developed and rivals can substitute across upstream providers at low cost, the foreclosure incentive on any one provider is attenuated. This is a structural reason to expect the foreclosure mechanisms in this paper to matter most acutely in vertically integrated first-party surfaces and in submarkets with deep model-specific dependencies, and to matter less in markets that have matured to full multi-provider substitutability.

\paragraph{Players.}
\begin{itemize}[leftmargin=*, topsep=2pt, itemsep=2pt]
\item One upstream firm $U$: controls inference infrastructure, sets API terms $(p, q_i)$, and operates one downstream application with quality $Q_U$.
\item $N$ downstream rivals indexed $i = 1, \dots, N$: independent developers, each operating one application with quality $Q_i$.
\item Consumers of unit mass: choose among $N+1$ applications under logit demand.
\end{itemize}

\paragraph{Assumptions.} The following three assumptions structure the analysis.

\begin{assumption}[Quality aggregation] \label{ass:quality}
Final application quality perceived by consumers is
\begin{equation}
Q_i = \alpha \cdot q_i + (1 - \alpha) \cdot e_i, \qquad i \in \{1, \dots, N, U\} \label{eq:quality}
\end{equation}
where $q_i \in [0,1]$ is the inference QoS delivered to firm $i$ (chosen by $U$ for rivals; chosen directly by $U$ for its own app), $e_i \in [0, \bar{e}]$ is firm-$i$'s application-layer effort, and $\alpha \in (0,1)$ is the share of quality attributable to inference.
\end{assumption}

\begin{assumption}[Infrastructure cost regularity] \label{ass:cost}
$U$'s infrastructure cost is
\begin{equation}
C(q_U, \{q_i\}) = \frac{\gamma}{2} \left( q_U^2 + \sum_{i=1}^N q_i^2 \right) + \phi \cdot \mathbb{1}\{q_U < q_{\max}\} \cdot (q_{\max} - q_U) \label{eq:cost}
\end{equation}
where $q_{\max} = \max_i q_i$, $\gamma > 0$ is the quadratic cost parameter, and $\phi \geq 0$ captures economies of scope.
\end{assumption}

The parameter $\phi$ serves two functions. First, it regularizes the feasible set by penalizing configurations in which $U$ serves rivals at higher QoS than itself, which is implausible whenever infrastructure capacity is shared across first-party and third-party workloads. Second, when $\phi > 0$, it substantively captures the economic intuition that building infrastructure capable of delivering high QoS to rivals simultaneously lowers the marginal cost of delivering high QoS to $U$'s own application; top-tier self-service is then operationally natural rather than costly. In the interior regime $q_U \geq q_{\max}$ that the first-order analysis below verifies is optimal, the kinked term is zero and the model reduces to quadratic cost. When $\phi = 0$, the feasible set is unconstrained and the interior FOCs characterize the optimum completely. When $\phi$ is large, discriminatory configurations with $q_U$ below $q_{\max}$ become infeasible, so the feasible set restricts the realized QoS gap toward zero.

\begin{assumption}[Symmetric rivals and entry elasticity] \label{ass:symmetric}
Rivals are ex ante symmetric with $e_i = e$ and $s_i = s$ for all $i \in \{1, \dots, N\}$. Rival entry follows a reduced-form entry function: rival $i$ enters with probability $\rho(q_i, p)$, where
\begin{equation}
\eta \equiv \frac{\partial \rho}{\partial q_i} \cdot \frac{q_i}{\rho} > 0 \label{eq:eta}
\end{equation}
is the elasticity of entry with respect to QoS at the equilibrium value.
\end{assumption}

In this reduced-form formulation, $\eta$ should be interpreted broadly as the elasticity of the active rival set with respect to QoS. It bundles together several distinct margins that a richer multi-provider model would separate: downstream exit (rivals that cease operating under the focal provider following degradation), upstream defection (rivals that migrate their workload to an alternative inference provider), and multi-homing adjustments (rivals that re-route a portion of traffic through middleware to alternative providers). The present model absorbs these into a single reduced-form entry elasticity; disentangling them is left for future work on oligopolistic extensions.

\paragraph{Scope of the functional-form choices.} The model uses logit demand and symmetric rivals for analytical tractability. Alternative demand systems (CES, nested logit, discrete-choice with unobserved heterogeneity) and asymmetric rivals would change the specific comparative statics and may alter some equilibrium details. The central economic mechanism, however, is the tradeoff between the downstream business-stealing gain from degrading rival QoS and the API-revenue loss from discouraging rival entry. That tradeoff is present in any setting where an upstream provider both controls a quality-adjustable input and competes downstream, and the direction of the key comparative statics (increasing in downstream margin, decreasing in API price and entry elasticity) is a consequence of the trade-off rather than of logit specifically.

\paragraph{Timing.}
\begin{enumerate}[leftmargin=*, topsep=2pt, itemsep=2pt]
\item Stage 1: $U$ commits to $(p, q_U, \{q_i\})$. Rivals observe and decide entry.
\item Stage 2: Active firms choose effort $e_i$ at cost $\psi(e_i) = \tfrac{k}{2} e_i^2$, set prices $\{p_i\}$, and consumers choose.
\end{enumerate}

\subsection{Consumer demand and downstream equilibrium}

Logit demand with utility $u_i = Q_i - p_i + \varepsilon_i$, where $\varepsilon_i$ is i.i.d.\ Type-I extreme value, gives market shares
\begin{equation}
s_i = \frac{\exp(Q_i - p_i)}{1 + \sum_{j=1}^{N+1} \exp(Q_j - p_j)}. \label{eq:shares}
\end{equation}

Firm $i$'s profit is $\pi_i^D = (p_i - c_d) s_i - \tfrac{k}{2} e_i^2 - p \cdot \mathbb{1}\{i \neq U\}$. Standard manipulations yield the logit markup $p_i^* = c_d + 1/(1-s_i)$ and the effort FOC
\begin{equation}
e_i^* = \frac{1-\alpha}{k} (p_i - c_d) s_i (1 - s_i). \label{eq:effort}
\end{equation}

\begin{remark}[Effort--QoS complementarity]
$e_i^*$ is increasing in $q_i$: higher inference quality raises $Q_i$, raises $s_i$, and thereby raises the marginal return to application-layer effort. Degrading $q_i$ therefore both directly lowers rival quality and indirectly reduces rival investment. This complementarity amplifies the welfare cost of foreclosure (Proposition~\ref{prop:welfare}).
\end{remark}

\subsection{Upstream foreclosure decision}

$U$ chooses $(p, q_U, \{q_i\})$ in Stage 1 to maximize
\begin{equation}
\pi^U = \underbrace{(p_U - c_d) s_U - \tfrac{k}{2} e_U^2}_{\text{own downstream profit}} + \underbrace{p \sum_{i \neq U} \rho(q_i, p)}_{\text{API revenue}} - \underbrace{C(q_U, \{q_i\})}_{\text{infra cost}}. \label{eq:upstream}
\end{equation}
Let $m_U \equiv p_U - c_d$ denote $U$'s downstream margin. Under Assumption~\ref{ass:symmetric}, $s_i = s$ for all $i \neq U$; write $\rho(q_i, p) = \rho$ with elasticity $\eta$ as in (\ref{eq:eta}).

\subsection{Main result: the equilibrium QoS gap}

\begin{proposition}[Equilibrium QoS gap] \label{prop:gap}
Under Assumptions~\ref{ass:quality}--\ref{ass:symmetric}, at an interior symmetric optimum the equilibrium QoS gap $\Delta q^* \equiv q_U^* - q_i^*$ admits the explicit local equilibrium characterization
\begin{equation}
\Delta q^* = \frac{1}{\gamma}\left\{ \alpha \cdot m_U \cdot s_U \big[(1 - s_U) + s\big] - \frac{p \rho \eta}{q_i^*} \right\} \label{eq:gap}
\end{equation}
where $s_U$, $s$, and $q_i^*$ are equilibrium objects. The bracketed term decomposes into: (a) direct business-stealing from raising $q_U$, $\alpha m_U s_U(1-s_U)$; (b) cross-demand recapture from lowering $q_i$, $\alpha m_U s_U s$; and (c) foregone API revenue at the entry margin, $p \rho \eta / q_i^*$. Discrimination ($\Delta q^* > 0$) occurs if and only if
\begin{equation}
\alpha m_U s_U \big[(1 - s_U) + s\big] > \frac{p \rho \eta}{q_i^*}. \label{eq:condition}
\end{equation}
Condition~(\ref{eq:condition}) defines a joint boundary in $(m_U, p, \eta, s_U, s, q_i)$ space. The magnitude of the gap, at an interior optimum, has the following comparative statics: $\partial \Delta q^* / \partial \alpha > 0$; $\partial \Delta q^* / \partial m_U > 0$; $\partial \Delta q^* / \partial p < 0$; $\partial \Delta q^* / \partial \eta < 0$; $\partial \Delta q^* / \partial \gamma < 0$. The effect of $N$ on $\Delta q^*$ has the opposite sign of its effect on the equilibrium share ratio; in leading cases it is negative, so more rivals increase discrimination.
\end{proposition}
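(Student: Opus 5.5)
The plan is to derive (\ref{eq:gap}) directly from the Stage-1 first-order conditions of $U$'s problem (\ref{eq:upstream}). Downstream outcomes are held at their Stage-2 values, and the envelope theorem handles all downstream reoptimization. In the interior regime $q_U \ge q_{\max}$ the kinked term in (\ref{eq:cost}) vanishes, so $C$ is purely quadratic. I will treat that regime first and verify at the end that the candidate satisfies $\Delta q^*\ge 0$ exactly when (\ref{eq:condition}) holds. Otherwise the solution sits at the kink, and no positive gap arises.

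\textbf{Step 1: the two first-order conditions.}

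For $q_U$: by the envelope theorem applied to $U$'s own pricing and effort, the marginal benefit is $m_U\,\partial s_U/\partial q_U$. Under logit, $\partial s_U/\partial Q_U = s_U(1-s_U)$ and $\partial Q_U/\partial q_U=\alpha$. This gives
\[
\alpha m_U s_U(1-s_U) = \gamma q_U^* .
\]

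For a single rival's $q_i$: lowering $q_i$ raises $s_U$ through the cross-derivative $\partial s_U/\partial Q_i = -s_U s$. This yields the recapture term $-\alpha m_U s_U s$. It also lowers API revenue at rate $p\,\partial\rho/\partial q_i = p\rho\eta/q_i$, by the definition (\ref{eq:eta}). The FOC is therefore
\[
-\alpha m_U s_U s + \frac{p\rho\eta}{q_i^*} = \gamma q_i^* .
\]

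Subtracting the second FOC from the first and dividing by $\gamma$ gives (\ref{eq:gap}) together with the decomposition (a)–(c). The sign condition (\ref{eq:condition}) follows immediately because $\gamma>0$.

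\textbf{Step 2: the main obstacle.}

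The hardest part is the rival-induced feedback on $U$'s margin. Changing $q_i$ shifts $p_U^*=c_d+1/(1-s_U)$ and the rivals' efforts $e_i^*$ from (\ref{eq:effort}), and these effects do not drop out by the envelope theorem, since $U$ does not choose them. My first approach is to adopt the "local characterization" reading: evaluate the derivatives holding the Stage-2 equilibrium objects $(s_U,s,m_U)$ fixed, and absorb the strategic effort response into the reduced form. I will state this explicitly as what "explicit local equilibrium characterization" means.

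If a fuller treatment is needed, the effort complementarity (the Remark) only scales the cross-effect by a factor $\ge 1$. This preserves both the sign and the comparative statics.

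\textbf{Step 3: comparative statics.}

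I will read these as partial derivatives of the right-hand side of (\ref{eq:gap}), holding the equilibrium objects fixed:

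\begin{itemize}
\item $\partial/\partial\alpha = m_U s_U[(1-s_U)+s]/\gamma>0$;
\item $\partial/\partial m_U>0$ similarly;
\item $\partial/\partial p=-\rho\eta/(\gamma q_i^*)<0$;
\item $\partial/\partial\eta<0$.
\end{itemize}

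For $\gamma$, the gap is $B/\gamma$ with $B>0$ in the discrimination region, so $\partial\Delta q^*/\partial \gamma = -B/\gamma^2<0$.

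For $N$, with symmetric rivals, $s_U+Ns<1$. Increasing $N$ lowers both $s_U$ and $s$, so the effect of $N$ runs through $s_U[(1-s_U)+s]$ and through $\rho$ per rival. I will express the derivative in terms of the ratio $s/s_U$ and show that its sign is opposite to $d(s/s_U)/dN$. In the leading case where $Q_U$ exceeds $Q_i$, that ratio falls with $N$, giving a negative effect.

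I regard this last claim as the weakest step. I would present it only as a sign relation plus a leading case, not as a general theorem.
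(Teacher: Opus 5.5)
Your derivation of (\ref{eq:gap}), the decomposition (a)--(c), condition (\ref{eq:condition}), and the signed partials in $\alpha$, $m_U$, $p$, $\eta$, $\gamma$ match the paper's proof. It uses the same two interior first-order conditions, the same subtraction, and the same differentiation with $s_U$, $s$, $q_i^*$ held fixed. Your restriction of $\partial\Delta q^*/\partial\gamma<0$ to the region $\Delta q^*>0$ is exactly what the paper's $-\Delta q^*/\gamma$ needs.

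The step that fails is the one you flagged. You plan to show that $\partial\Delta q^*/\partial N$ has the sign opposite to $d(s/s_U)/dN$. But under logit, $s/s_U=\exp[(Q_i-p_i)-(Q_U-p_U)]$ does not depend on $N$ when qualities and prices are held fixed (IIA). The share term does move: at fixed utilities, $\partial s_U/\partial N=-s_U s$ and $\partial s/\partial N=-s^2$, so
\[
\frac{d}{dN}\Big\{s_U\big[(1-s_U)+s\big]\Big\}=(1-2s_U+s)\frac{\partial s_U}{\partial N}+s_U\frac{\partial s}{\partial N}=-s_U s\,(1-2s_U+2s),
\]
which is generically nonzero. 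So $s_U[(1-s_U)+s]$ is not a function of the share ratio alone, and no general sign relation of that form can be proved. Nor does $Q_U>Q_i$ by itself make $s/s_U$ fall: the ratio moves only through equilibrium adjustments in prices, efforts and QoS, whose signs depend on parameters.

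The paper's proof takes the direct route. It differentiates $s_U[(1-s_U)+s]$, calls the sign ambiguous in general, and invokes the leading case of a large first-party share. The display above makes that precise: holding $m_U$, $q_i^*$ and utilities fixed, $\partial\Delta q^*/\partial N>0$ if and only if $s_U>\tfrac12+s$. That is a \emph{positive} effect on the gap, which is what ``more rivals increase discrimination'' means. If your closing ``giving a negative effect'' refers to $\partial\Delta q^*/\partial N$, its sign is backwards.

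Separately, drop the aside that rival-effort feedback only rescales the recapture term by a factor $\ge 1$ and preserves the comparative statics. Once effort feedback is included, that term becomes $\alpha m_U s_U s\,M$ with $M$ depending on $\alpha$ (because $e_i^*\propto 1-\alpha$). Then (\ref{eq:gap}) no longer holds as written, and the sign of $\partial\Delta q^*/\partial\alpha$ is no longer automatic. The local reading you adopt, which is also the paper's, is all the proposition asserts.
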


The role of $\alpha$ in~(\ref{eq:gap}) is to scale how strongly inference quality translates into perceived application quality. A larger $\alpha$ therefore amplifies the downstream business-stealing gain from raising $q_U$ and the cross-demand recapture gain from lowering $q_i$, both of which operate through the share response $s_U[(1-s_U)+s]$, without proportionally amplifying the API-revenue-at-margin term $p \rho \eta / q_i^*$. The net effect is to widen the gap as $\alpha$ rises. This is the precise sense in which inference quality matters: not as a binary threshold on whether discrimination occurs, but as a multiplier on its magnitude. In earlier working drafts, Proposition~\ref{prop:gap} was stated as a threshold condition on $\alpha$ alone. That statement has been refined: discrimination occurs or fails to occur based on the joint condition~(\ref{eq:condition}), and $\alpha$ governs the magnitude of the equilibrium gap conditional on discrimination occurring.

\begin{proof}[Proof sketch]
Full proof in Appendix~\ref{app:prop1}. The Stage-1 FOCs for $q_U$ and $q_i$ (interior, $\phi = 0$ regime) yield $q_U^* = \alpha m_U s_U(1-s_U)/\gamma$ and an implicit equation for $q_i^*$ from which the expression~(\ref{eq:gap}) follows by substitution. Comparative statics follow by direct differentiation; the signs of $\partial \Delta q^*/\partial \alpha$ and $\partial \Delta q^*/\partial m_U$ are unambiguous because the bracketed term in~(\ref{eq:gap}) is positive whenever~(\ref{eq:condition}) holds. The role of $\phi$ is to ensure the interior solution lies in the region $q_U^* \geq q_i^*$ so the kinked-cost term is zero; when $\phi$ is large, Assumption~\ref{ass:cost} restricts the feasible set, reducing $\Delta q^*$ toward zero.
\end{proof}

\begin{corollary}[Threshold recovery] \label{cor:threshold}
Define $\bar{\alpha}(\gamma, m_U, p, \eta, s_U, s, q_i^*)$ as the value of $\alpha$ at which the bracket in~(\ref{eq:gap}) hits zero holding all other parameters fixed. Then discrimination ($\Delta q^* > 0$) occurs whenever~(\ref{eq:condition}) is satisfied at reference values, and the \emph{magnitude} of discrimination scales linearly in $\alpha$. The threshold interpretation in the prior working literature is recovered as a joint restriction on $(m_U, p, \eta)$, not as a simple univariate threshold on $\alpha$.
\end{corollary}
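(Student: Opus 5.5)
The plan is to prove Corollary~\ref{cor:threshold} directly from the closed form~(\ref{eq:gap}) in Proposition~\ref{prop:gap}, treating $s_U$, $s$, $q_i^*$, $\rho$ as reference values held fixed, as the corollary's phrase ``holding all other parameters fixed'' allows. Write the bracket as
\begin{equation*}
B(\alpha) = \alpha A - R, \qquad A \equiv m_U s_U\big[(1-s_U)+s\big], \qquad R \equiv \frac{p\rho\eta}{q_i^*}.
\end{equation*}
Since $s_U \in (0,1)$, $s \ge 0$ and $m_U>0$ (the logit markup $1/(1-s_U)$ is positive), we have $A>0$. Likewise $R \ge 0$ because $p,\rho,q_i^*>0$ and $\eta>0$ by Assumption~\ref{ass:symmetric}.

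The first step is existence and uniqueness of the threshold. Because $B$ is affine in $\alpha$ with strictly positive slope $A$, it has the unique root $\bar\alpha = R/A = p\rho\eta/\big(q_i^* m_U s_U[(1-s_U)+s]\big)$. This root depends on exactly the arguments listed in the corollary; $\gamma$ enters only trivially, since it scales the gap but not the root.

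The second step is the sign equivalence. For $\alpha>\bar\alpha$ we have $B>0$, so $\Delta q^* = B/\gamma > 0$. This is exactly condition~(\ref{eq:condition}), and it invokes the ``if and only if'' already established in Proposition~\ref{prop:gap}. The third step is the magnitude claim: at fixed reference values, $\Delta q^* = (A/\gamma)(\alpha-\bar\alpha)$, which is affine in $\alpha$ with slope $A/\gamma>0$. So conditional on discrimination occurring, the gap scales linearly in $\alpha$; I will state this as linearity in the excess $\alpha-\bar\alpha$.

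The fourth step is to recast $\alpha>\bar\alpha$ as the joint restriction
\begin{equation*}
\frac{p\eta}{m_U} < \frac{\alpha\, q_i^* s_U\big[(1-s_U)+s\big]}{\rho}.
\end{equation*}
This displays the threshold as a boundary surface in $(m_U,p,\eta)$ rather than a univariate cutoff on $\alpha$. Moving along that surface, by raising $p\eta$ proportionally with $m_U$, leaves the discrimination decision unchanged for every fixed $\alpha$, which shows $\alpha$ alone cannot determine it.

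The main obstacle is the status of the equilibrium objects. Strictly, $s_U$, $s$, $q_i^*$ and $\rho$ are themselves functions of $\alpha$, so the gap is not globally linear in $\alpha$. I would handle this explicitly by phrasing the result as a local, reference-point statement, consistent with the ``local equilibrium characterization'' wording of Proposition~\ref{prop:gap}. If a total-derivative version is wanted, I would first try an implicit-function argument at an interior optimum to show the indirect effects are second order near $\bar\alpha$. I would not claim more than the partial-equilibrium linearity.
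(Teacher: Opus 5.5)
Your proposal is correct and is essentially the paper's own argument: the corollary has no separate proof and is read off Appendix~\ref{app:prop1}, whose ``Sign of the gap'' paragraph is your second step and whose derivative $\partial \Delta q^*/\partial\alpha = (m_U s_U/\gamma)[(1-s_U)+s]>0$, taken ``treating $s_U, s, q_i^*$ as locally fixed,'' is your affine form $\Delta q^* = (A/\gamma)(\alpha-\bar\alpha)$. Your phrasing as linearity in the excess $\alpha-\bar\alpha$, with $A>0$ holding unconditionally, is if anything tighter than the paper's.

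Two small fixes. First, the claim that $\alpha$ alone cannot settle the sign follows from moving \emph{across} your surface at fixed $\alpha$, which flips the sign (equivalently, $\bar\alpha$ moves with $p\eta/m_U$). Invariance \emph{along} the surface only shows that $(m_U,p,\eta)$ enter through the single index $p\eta/m_U$. Second, an implicit-function argument will not make the indirect effects through $s_U$, $s$, $q_i^*$, $\rho$ second order near $\bar\alpha$, because $\Delta q^*$ is a difference of optimal choices rather than a maximized value and no envelope condition applies; stopping at the local reference-point statement is the right call.
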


Corollary~\ref{cor:threshold} clarifies that $\alpha$ governs how much quality moves when $q_i$ moves (a scaling parameter), while the discrimination boundary is determined by a margin-versus-revenue-at-margin condition. This is the substantive correction to the statement of this proposition in the earlier working draft.

\begin{proposition}[Welfare effects] \label{prop:welfare}
Let $W = \sum_i s_i Q_i - \sum_i c_d s_i - C(q_U, \{q_i\})$ be total welfare. Under Assumptions~\ref{ass:quality}--\ref{ass:symmetric}, the first-best QoS satisfies $q^{FB} = \alpha s(1-s)/\gamma$ (symmetric). Whenever~(\ref{eq:condition}) holds, the private optimum has $q_U^* > q^{FB} > q_i^*$, and the welfare loss decomposes as
\begin{equation}
\Delta W = \underbrace{\alpha \sum_{i \neq U} s \cdot (q^{FB} - q_i^*)}_{\Delta W_1:\,\text{direct quality loss}} \cdot \underbrace{[1 + \beta]}_{\text{effort multiplier}} + \underbrace{\Delta W_3}_{\text{business-stealing DWL}} \label{eq:welfare}
\end{equation}
where $\beta = \tfrac{1-\alpha}{k\alpha} \tfrac{\partial e}{\partial q} > 0$ captures reduced downstream innovation.
\end{proposition}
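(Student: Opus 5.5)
The plan is to treat welfare as a planner's problem over the same primitives and compare its first-order conditions with the private Stage-1 conditions of Proposition~\ref{prop:gap}. First I will characterize the first best. Holding downstream pricing at the logit markup and effort at the Stage-2 response (\ref{eq:effort}), I differentiate $W = \sum_i s_i Q_i - c_d\sum_i s_i - C$ with respect to each $q_j$. The envelope-type simplification is that, under logit, the change in $\sum_i s_i Q_i$ induced by $q_j$ is dominated by the direct term $\alpha s_j$. The reallocation term $\alpha s_j(Q_j - \bar Q)$ I will linearize around the symmetric point, where it is proportional to $s_j(1-s_j)$.

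Setting the marginal welfare gain equal to the marginal cost $\gamma q_j$ (with $\phi=0$, since the kink is inactive at a symmetric configuration) gives
\[
\gamma q^{FB} = \alpha s(1-s)
\]
at the symmetric point. This yields $q^{FB}=\alpha s(1-s)/\gamma$. I will state this as a local characterization in the same spirit as Proposition~\ref{prop:gap}, with $s$ evaluated at the symmetric first-best shares.

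Second, I establish the ordering $q_U^* > q^{FB} > q_i^*$ by comparing FOCs.
\begin{itemize}
\item From the proof sketch of Proposition~\ref{prop:gap}, $\gamma q_U^* = \alpha m_U s_U(1-s_U)$. Since the logit markup gives $m_U = 1/(1-s_U) \ge 1$ and $U$'s private return ignores no business it loses, I argue that $\gamma q_U^* \ge \alpha s(1-s)$. The reason is that $U$ internalizes a margin-weighted return rather than the social one, and under (\ref{eq:condition}) the share $s_U$ exceeds the symmetric $s$.
\item For rivals, the private FOC for $q_i$ carries the negative recapture term $-\alpha m_U s_U s$ on top of the entry-margin term $p\rho\eta/q_i^*$. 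Subtracting it from the social FOC shows that whenever (\ref{eq:condition}) holds the net private marginal return is strictly below $\alpha s(1-s)$. Hence $q_i^* < q^{FB}$.
\end{itemize}

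Third, I derive the decomposition (\ref{eq:welfare}) by a first-order Taylor expansion of $W$ around the first best. The rival-quality deviation contributes $\alpha \sum_{i\neq U} s\,(q^{FB}-q_i^*)$ directly; this is $\Delta W_1$. Because $e$ responds to $q$ through (\ref{eq:effort}) and effort enters $Q_i$ with weight $1-\alpha$, the chain rule adds $(1-\alpha)\,\frac{\partial e}{\partial q}\,(q^{FB}-q_i^*)\,s/k$ per rival. Factoring out $\Delta W_1$ gives exactly the multiplier $1+\beta$ with $\beta = \frac{1-\alpha}{k\alpha}\frac{\partial e}{\partial q}$. Positivity of $\beta$ follows from the effort--QoS complementarity remark. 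Everything else, meaning the share reallocation toward $U$ and the excess cost of $q_U^*>q^{FB}$, I collect into $\Delta W_3$, which I show is nonnegative because it is a misallocation relative to the planner's optimum.

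The main obstacle is the first step, together with the ordering $q_U^*>q^{FB}$. The exact social FOC under logit includes reallocation and effort-feedback terms that do not collapse neatly to $\alpha s(1-s)$. The comparison between $m_U s_U(1-s_U)$ and $s(1-s)$ also depends on equilibrium shares. I would first try to work at the symmetric linearization point, using $m_U=1/(1-s_U)$ to reduce $U$'s FOC to $\gamma q_U^*=\alpha s_U$. Since $s_U > s(1-s)$ once $s_U\ge s$, the inequality follows. Failing that, I would state the ordering as holding locally under (\ref{eq:condition}).
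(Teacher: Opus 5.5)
Your overall route (planner FOC, comparison of first-order conditions, chain-rule decomposition) mirrors the paper's. However, the steps you supply to fill in what the paper only asserts do not go through.

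The decisive failure is the derivation of $q_i^*<q^{FB}$ from (\ref{eq:condition}). Combine the rival FOC $\gamma q_i^* = p\rho\eta/q_i^* - \alpha m_U s_U s$ with $\gamma q_U^* = \alpha m_U s_U(1-s_U)$. This shows that (\ref{eq:condition}) is \emph{equivalent} to $q_i^*<q_U^*$: it bounds $\gamma q_i^*$ by $\gamma q_U^*$, not by $\alpha s(1-s)$. Your first bullet asserts $\gamma q_U^* > \alpha s(1-s)$, so that bound is slack relative to $q^{FB}$ and cannot yield $q_i^*<q^{FB}$. The ordering actually requires $p\rho\eta/q_i^* < \alpha s(1-s)+\alpha m_U s_U s$, whereas (\ref{eq:condition}) only gives $p\rho\eta/q_i^* < \alpha m_U s_U(1-s_U)+\alpha m_U s_U s$. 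Whenever $p\rho\eta/q_i^*$ lies between these two right-hand sides, $U$ discriminates yet $q_U^*>q_i^*\ge q^{FB}$. Economically, the API entry term is a private reason to supply rival QoS that has no counterpart in the planner's condition. A large $p\rho\eta$ therefore keeps $q_i^*$ high even while $U$ discriminates.

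The paper's proof does not close this either. It obtains $q_i^*<q^{FB}$ ``under Proposition~\ref{prop:gap}'' without argument and never addresses $q_U^*>q^{FB}$. You therefore need an extra hypothesis of the form above.

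Your first-best step has the same problem. At fixed prices and before effort feedback, the exact derivative of the stated $W$ is $\alpha s_j[1+(Q_j-\bar{Q})-c_d s_0]-\gamma q_j$, with $\bar{Q}=\sum_i s_iQ_i$. At a symmetric point $Q_j-\bar{Q}=s_0Q$, so the reallocation term is additive and proportional to $s\,s_0$. No factor $(1-s)$ appears, and keeping only the direct term gives $\gamma q=\alpha s$. So $q^{FB}=\alpha s(1-s)/\gamma$ can only be taken as given, as the paper does.

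Second, your argument that $\Delta W_3\ge 0$ ``because it is a misallocation relative to the planner's optimum'' fails, whichever way $q^{FB}$ is read:
\begin{enumerate}
\item Suppose $q^{FB}$ is an interior planner optimum, with the Stage-2 responses built in as you set it up. Then the linear term of your Taylor expansion vanishes, so $\Delta W$ is second order in $q^*-q^{FB}$. But $\Delta W_1(1+\beta)$ is first order and strictly positive, so the residual $\Delta W_3$ is approximately $-\Delta W_1(1+\beta)<0$ for small deviations. Your bookkeeping drops the cost saving $\gamma q^{FB}\sum_{i\neq U}(q^{FB}-q_i^*)$ from lower rival QoS, which by the planner's FOC cancels the first-order benefit loss.
\item Suppose instead $q^{FB}$ is only the asserted formula and not the true optimum. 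Then optimality says nothing about the sign of any piece.
\end{enumerate}
Optimality of the first best signs the total loss, not a residual component. Nonnegative components would require a second-order (triangle) decomposition, each piece being an area between marginal social benefit and marginal cost.

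Finally, the chain rule on the $(1-\alpha)e$ term gives $(1-\alpha)s\,(\partial e/\partial q)(q^{FB}-q_i^*)$ per rival. The extra $1/k$ you insert to match $\beta$ is not produced by it, since $\partial e/\partial q$ computed from (\ref{eq:effort}) already carries the $1/k$.
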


\begin{proof}
See Appendix~\ref{app:prop2}.
\end{proof}

\section{Extensions} \label{sec:extensions}

\subsection{Dynamic foreclosure: ``open early, closed late''}

Providers may initially offer generous API access to attract ecosystem development, then degrade QoS once switching costs are high. Extend the baseline to two periods: rivals integrate in Period 1 (sunk cost $F > 0$), face switching cost $S > 0$ if they migrate in Period 2, and $U$ reoptimizes knowing rivals are locked in.

\begin{proposition}[Dynamic foreclosure] \label{prop:dynamic}
If $S$ exceeds the per-period profit differential from non-discrimination, $U$'s optimal strategy is: (i) set $q_U^1 = q_i^1$ in Period 1 to attract entry, (ii) set $q_i^2 < q_U^2$ in Period 2 at the level that leaves rivals indifferent between staying and switching, i.e., $\pi_i(q_i^2) = \pi_i^{\text{alt}} - S$. Ex ante efficient entry is deterred whenever the Period-2 rent extraction exceeds the rival's discounted Period-1 surplus.
\end{proposition}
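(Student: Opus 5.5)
The plan is to solve the two-period game by backward induction, treating Period 2 as a constrained version of the static problem behind Proposition~\ref{prop:gap} and Period 1 as an entry-inducement problem.

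\textbf{Period 2.} Rivals have already sunk $F$. Migrating costs them $S$. Hence, given $(p,q_U^2,\{q_i^2\})$, a locked-in rival stays if and only if $\pi_i(q_i^2)\ge \pi_i^{\text{alt}}-S$. This participation constraint replaces the reduced-form entry margin $\rho(q_i,p)$ of Assumption~\ref{ass:symmetric}: as long as the constraint holds, lowering $q_i^2$ costs $U$ no API revenue. In the Period-2 analogue of the first-order condition for $q_i$, the term $p\rho\eta/q_i$ in~(\ref{eq:gap}) therefore drops out. What remains is the business-stealing and recapture gain $\alpha m_U s_U[(1-s_U)+s]$, together with the cost saving $\gamma q_i$.

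All of these push $q_i^2$ downward. So I will argue that $U$ lowers $q_i^2$ until the participation constraint binds, which gives $\pi_i(q_i^2)=\pi_i^{\text{alt}}-S$. This holds provided the unconstrained minimizer, which is the corner $q_i=0$, violates the constraint. The hypothesis that $S$ exceeds the per-period profit differential from non-discrimination is exactly what makes the binding level strictly below parity. In that case rivals tolerate a strict degradation, and $\pi_i$ is increasing in $q_i$ by the Remark on effort--QoS complementarity. Meanwhile $q_U^2=\alpha m_U s_U(1-s_U)/\gamma$ as in the proof sketch of Proposition~\ref{prop:gap}, and $q_U^2>q_i^2$ follows.

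\textbf{Period 1.} Rivals are not yet locked in. Each rival anticipates the Period-2 extraction and enters only if
\[\pi_i(q_i^1)+\delta\bigl(\pi_i^{\text{alt}}-S\bigr)\ge F.\]
Here $U$ faces the full static tradeoff plus the value of the future rent. The future rent is $p$ times the locked-in rival mass, plus the Period-2 business-stealing gain. This adds a positive term to the marginal value of entry, i.e.\ it raises the effective coefficient on $p\rho\eta/q_i$. I will show that this term reverses condition~(\ref{eq:condition}), so the Period-1 optimum has $q_i^1\ge q_U^1$. The kinked cost term in Assumption~\ref{ass:cost} then penalizes $q_i^1>q_U^1$ when $\phi>0$, which pins the solution at parity, $q_U^1=q_i^1$. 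This step is where I expect the main difficulty. It requires a sign argument on a kinked objective, and it may need an added regularity assumption that the continuation value is large relative to the Period-1 recapture gain. I would state this explicitly as the content of ``attract entry.''

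\textbf{Deterrence clause.} A rival who foresees the Period-2 extraction compares its discounted Period-1 surplus $\pi_i(q_i^1)-F$ with the Period-2 loss $\delta\bigl[\pi_i(q^{FB})-(\pi_i^{\text{alt}}-S)\bigr]$ relative to the non-discriminatory benchmark. When the extraction exceeds that surplus, the entry inequality fails for rivals whose entry would be efficient under parity. Such rivals do not enter, which establishes the final claim.
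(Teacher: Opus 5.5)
The paper states Proposition~\ref{prop:dynamic} without proof; the appendix proves only Propositions~\ref{prop:gap}, \ref{prop:welfare} and~\ref{prop:tier}. Your argument therefore has to stand alone, and its Period-1 step does not deliver (i).

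You write entry as the deterministic constraint $\pi_i(q_i^1)+\delta(\pi_i^{\text{alt}}-S)\ge F$. You then argue through the elasticity term $p\rho\eta/q_i$, which belongs to the reduced-form entry probability of Assumption~\ref{ass:symmetric}. These are different entry models, and they give different answers.

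Under the deterministic model, your own Period-2 reasoning applies verbatim. Conditional on entry, API revenue no longer varies with $q_i^1$. $U$'s Period-1 payoff is then strictly decreasing in $q_i^1$, through the recapture term $\alpha m_U s_U s$ and the cost saving $\gamma q_i^1$. So $U$ lowers $q_i^1$ until the entry constraint binds, $\pi_i(q_i^1)=F-\delta(\pi_i^{\text{alt}}-S)$. That level equals $q_U^1$ only by coincidence. The one non-coincidental case is when it exceeds $\alpha m_U s_U(1-s_U)/\gamma$ and $\phi$ is large enough that $U$ prefers raising $q_U^1$ to paying the kink.

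Under the $\rho$ model, your claim that the kink pins the solution at parity whenever $\phi>0$ is false. Let $f$ be the smooth part of $U$'s objective. Because the penalty loads only on $q_{\max}$, a symmetric point $q_U=q_i$ is optimal only if $\partial_{q_U}f+N\,\partial_{q_i}f=0$ and $0\le N\,\partial_{q_i}f\le\phi$ there. If your reversal of~(\ref{eq:condition}) holds strictly and $\phi$ is small, the optimum has $q_i^1>q_U^1$, not parity. This includes the paper's baseline $\phi=0$.

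So (i) needs two hypotheses that the condition on $S$ does not supply:
\begin{itemize}
\item the continuation rent must outweigh the Period-1 recapture gain;
\item $\phi$ must dominate the resulting marginal incentive.
\end{itemize}
You must add both explicitly. You must also specify how $\rho$ responds to the anticipated Period-2 payoff.

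The deterrence clause also contains a slip. Write $X\equiv\delta[\pi_i(q^{FB})-(\pi_i^{\text{alt}}-S)]$ for the extraction. Your entry inequality is then equivalent to $X\le\pi_i(q_i^1)-F+\delta\pi_i(q^{FB})$. Hence $X>\pi_i(q_i^1)-F$ is necessary but not sufficient for entry to fail. The right comparison is between $X$ and the rival's full ex ante value under non-discrimination, $\pi_i(q_i^1)-F+\delta\pi_i(q^{FB})$. ``Efficient'' entry then means this value is nonnegative. That is private profitability, not social efficiency, so either say so or add a welfare comparison in the spirit of Proposition~\ref{prop:welfare}.

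In Period 2 you correctly note one gap in the statement: binding also requires the upper bound $\pi_i(0)<\pi_i^{\text{alt}}-S$. Beyond that, you must rule out $U$ pushing rivals past the constraint and letting them migrate. That option forfeits $p$ and leaves rivals competing at the alternative provider's quality, but it saves the cost of serving them. It must be compared explicitly rather than assumed away.
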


\subsection{Routing bias in assistant-layer markets}

When $U$ operates a user-facing assistant that routes requests to downstream tools, $U$'s routing decision maximizes
\begin{equation}
\max_{\{r_i\}} r_U \pi_U^D + \sum_{i \neq U} r_i \cdot p - C(\{r_i\}), \qquad \sum_i r_i = 1. \label{eq:routing}
\end{equation}

\begin{proposition}[Routing bias] \label{prop:routing}
Under logit preferences over tools with consumer-perceived quality $Q_i$ and unobservable self-preferencing $\lambda$, optimal routing satisfies $r_i^* \propto \exp(\theta Q_i) \cdot \mathbb{1}_{\{i = U\}}^\lambda$. The equilibrium self-preferencing bias $\lambda^*$ is increasing in $m_U$, increasing in consumer inattention, and increasing in the difficulty of observing routing ex post.
\end{proposition}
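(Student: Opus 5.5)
The plan is to treat the routing problem~(\ref{eq:routing}) as a constrained maximization over the simplex, with $U$'s objective augmented by a consumer-side logit discipline term. Concretely, I model the assistant's routing as solving
\[
\max_{\{r_i\}} \; \sum_i r_i \theta Q_i - \sum_i r_i \log r_i + \lambda \, r_U ,
\]
where $\lambda$ is the self-preferencing weight. The entropy term generates logit choice, and $\lambda$ enters as a shifter on $U$'s own index. I will justify this representation as a reduced form of~(\ref{eq:routing}). Consumers' logit preferences over tools discipline routing through the entropy/quality part. The private wedge $\pi_U^D - p$, net of the routing cost $C$, enters as the linear bonus on $r_U$.

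The Lagrangian first-order conditions on the simplex give $\theta Q_i - \log r_i - 1 + \lambda\mathbb{1}_{\{i=U\}} = \mu$. Exponentiating yields $r_i^* \propto \exp(\theta Q_i)\, e^{\lambda \mathbb{1}\{i=U\}}$. This is the stated form once the notation $\mathbb{1}_{\{i=U\}}^\lambda$ is read as the multiplicative factor $e^{\lambda}$ on $U$'s tool. Strict concavity of the entropy term gives uniqueness and interiority, so the FOCs characterize the optimum.

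Next I make $\lambda$ endogenous. $U$ chooses $\lambda$ to trade off the gain $\lambda \mapsto r_U(\lambda)(m_U s_U - p)$ against an expected penalty. That penalty is $\omega\, \delta\, D(\lambda)$, where:
\begin{itemize}
\item $\omega \in [0,1]$ is consumer attentiveness, i.e.\ the probability a consumer notices and defects;
\item $\delta$ is ex-post observability of routing, i.e.\ the detection probability by auditors;
\item $D$ is a convex loss, for instance $D(\lambda)=\tfrac{d}{2}\lambda^2$.
\end{itemize}
Using $\partial r_U/\partial\lambda = r_U(1-r_U)$, the FOC for an interior optimum is
\[
(m_U s_U - p)\, r_U(1-r_U) = \omega\,\delta\, d\,\lambda^* .
\]
Comparative statics then follow from the implicit function theorem: $\partial\lambda^*/\partial x = -F_x/F_\lambda$ with $F_\lambda<0$ at the optimum. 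The sign checks are:
\begin{itemize}
\item $F_{m_U} = s_U r_U(1-r_U) > 0$, so $\lambda^*$ is increasing in $m_U$;
\item $F_\omega = -\delta d \lambda^* < 0$, so $\lambda^*$ is increasing in inattention $1-\omega$;
\item $F_\delta = -\omega d \lambda^* < 0$, so $\lambda^*$ is increasing in the difficulty of observing routing $1/\delta$.
\end{itemize}

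The main obstacle is the second-order condition $F_\lambda<0$. The left-hand side $r_U(1-r_U)$ is itself nonmonotone in $\lambda$: it rises while $r_U<1/2$ and falls after. So concavity of the objective in $\lambda$ is not automatic. I would first try to show that the penalty's curvature $\omega\delta d$ dominates the maximal slope of the gain, namely $(m_U s_U - p)\max_r |r(1-r)(1-2r)| \le (m_U s_U-p)/(6\sqrt3)$. That gives a sufficient global condition, $\omega\delta d > (m_U s_U - p)/(6\sqrt3)$, under which $\lambda^*$ is unique and the implicit-function signs are valid. Failing that, I would state the comparative statics locally at any interior optimum satisfying the SOC, consistent with the local character of Proposition~\ref{prop:gap}. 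A secondary point is to assume $m_U s_U > p$, which makes $\lambda^*>0$; otherwise the optimal bias is zero or negative and monotonicity holds only weakly.
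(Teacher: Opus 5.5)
The paper gives no proof of Proposition~\ref{prop:routing}. It is stated in Section~\ref{sec:extensions}, and the appendix proves only Propositions~\ref{prop:gap}, \ref{prop:welfare} and~\ref{prop:tier}. Your argument therefore has to stand on its own.

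``Consumer inattention'' and ``difficulty of observing routing'' are not objects of the paper's model, so any proof must posit how they enter. The two corresponding comparative statics then hold largely by construction, as in yours. Within your model the algebra is right:
\begin{itemize}
\item the entropy-regularized simplex problem yields the logit form;
\item $\partial r_U/\partial\lambda=r_U(1-r_U)$;
\item $\max_{r\in[0,1]}|r(1-r)(1-2r)|=1/(6\sqrt3)$;
\item the implicit-function signs are as you state when $\lambda^*>0$.
\end{itemize}
Your reading of $\mathbb{1}_{\{i=U\}}^\lambda$ as $e^{\lambda\mathbb{1}\{i=U\}}$ is not optional. Taken literally, $0^\lambda=0$ for every rival when $\lambda>0$, so all traffic would go to $U$.

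The one real problem is that $\lambda$ plays two incompatible roles. In the first stage you justify the bonus $\lambda r_U$ as the reduced form of~(\ref{eq:routing}), so $\lambda$ \emph{is} the private wedge $\pi_U^D-p$ (relative to the weight on the consumer-discipline term). In the second stage you treat $\lambda$ as a free choice and count the same wedge again as the gain $r_U(\lambda)(m_Us_U-p)$. If the routing rule already internalizes the wedge, there is nothing left to choose. If $\lambda$ is a design choice, the first-stage objective is not $U$'s, and the reduced-form claim must go.

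The cleanest repair keeps only the first stage and puts the attention and observability weights on the discipline term. That is, take $U$ to solve $\max_r\, r_U m_Us_U+(1-r_U)p+\omega\delta\big[\sum_i r_i\theta Q_i-\sum_i r_i\log r_i\big]$. The FOC gives the stated form with $\lambda^*=(m_Us_U-p)/(\omega\delta)$ in closed form. All three comparative statics are then immediate, and there is no second-order issue because the problem is strictly concave in $r$.

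If you keep the two-stage version instead, the SOC you call the main obstacle is unnecessary. Let $\Pi(\lambda)=(m_Us_U-p)r_U(\lambda)-\tfrac{\omega\delta d}{2}\lambda^2$. Then $\partial^2\Pi/\partial\lambda\,\partial m_U=s_Ur_U(1-r_U)>0$, and $\partial^2\Pi/\partial\lambda\,\partial\omega=-\delta d\lambda\le0$ on $\lambda\ge0$ (likewise for $\delta$). When $m_Us_U>p$, every $\lambda<0$ is dominated by $\lambda=0$. Topkis's theorem then makes the maximizers monotone without concavity or uniqueness, and strictly so at interior optima.

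Finally, your remark that for $m_Us_U<p$ monotonicity ``holds only weakly'' is true only if you impose $\lambda\ge0$. With negative $\lambda$ allowed, the signs in $\omega$ and $\delta$ reverse.
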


\subsection{Tier-based access discrimination} \label{sec:tier}

The baseline model treats inference as a single good with adjustable QoS. This is inadequate for the April 2026 landscape, in which providers increasingly operate two model classes: a generally-available frontier (Claude Opus 4.7, GPT-5.4, Gemini 3.1 Pro) and restricted-access tiers above it (Claude Mythos Preview under Project Glasswing; Gemini Deep Think; OpenAI research previews).

Conventional QoS discrimination of the kind studied in Sections~\ref{sec:model} and 4.1--4.2 operates within a shared model or service layer: the provider and rivals access the same underlying capability, and the provider differentiates on serving parameters such as latency, throughput, and feature access. The rents this mechanism generates are subject to competitive erosion as rivals match performance at the serving layer. Tier-based access discrimination is structurally different. It operates at the capability frontier, gates access to a distinct model class that is not part of the baseline offering, and generates rents that persist for as long as the provider chooses not to release the withheld class. The empirical signature also differs: rather than a continuous performance differential, tier gating produces bimodal capability distributions across customer classes (Prediction~5 in Section~\ref{sec:predictions}). The policy response consequently differs, requiring transparency obligations at the model-class layer rather than at the serving-parameter layer (Pillar~4 in Section~\ref{sec:policy}).

\begin{definition}[Tier gap and partner exclusivity]
Let $\underline{Q}$ and $\overline{Q}$ denote baseline and frontier model capabilities, with $\overline{Q} > \underline{Q}$. Define:
\begin{itemize}[leftmargin=*, topsep=0pt, itemsep=1pt]
\item \emph{Tier gap:} $\tau = (\overline{Q} - \underline{Q})/\overline{Q} \in [0, 1]$, measured as normalized capability distance on a stated benchmark suite. $\tau = 0$ recovers the baseline single-model case.
\item \emph{Partner exclusivity:} $\kappa \in [0,1]$, the fraction of paying customers excluded from frontier access. $\kappa = 0$ means openly available; $\kappa = 1$ means access restricted to a named partner set $P$.
\end{itemize}
\end{definition}

Both $\tau$ and $\kappa$ are time-specific snapshots rather than permanent firm characteristics. In practice, restricted-access programs are dynamic: the included partner set expands and contracts, benchmark gaps narrow as baseline models incorporate frontier capabilities, and programs migrate over time toward general availability. Comparative risk mappings computed from $(\tau, \kappa)$ should therefore be updated as these values evolve, and designation under Pillar~4 (Section~\ref{sec:policy}) should be re-assessed at each release rather than treated as a persistent firm classification.

\begin{proposition}[Tier-based foreclosure] \label{prop:tier}
Holding vertical-integration $\lambda$ and baseline QoS fixed, the foreclosure gain satisfies
\begin{equation}
\frac{\partial(\text{foreclosure gain})}{\partial \tau} > 0 \text{ whenever } \kappa > 0, \qquad \frac{\partial(\text{foreclosure gain})}{\partial \kappa} > 0 \text{ whenever } \tau > 0.
\end{equation}
Capability rents from $\tau \cdot \kappa > 0$ persist as long as the provider chooses not to release the frontier model; they are not eroded by standard competitive pressure in the baseline-model market.
\end{proposition}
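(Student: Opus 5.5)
The plan is to make the reduced-form object ``foreclosure gain'' explicit and then read off the signs by differentiation. I will model the tier structure as an overlay on the baseline logit model of Section~\ref{sec:model}. The provider's own application and the partner set $P$ run on the frontier class with capability $\overline{Q}$. A fraction $\kappa$ of paying rivals is confined to the baseline class with capability $\underline{Q} = (1-\tau)\overline{Q}$, and the remaining fraction $1-\kappa$ gets frontier access. Baseline QoS and the self-preferencing parameter $\lambda$ are held fixed, as the statement requires. Perceived quality of an excluded rival then falls by $\alpha(\overline{Q}-\underline{Q}) = \alpha\tau\overline{Q}$ relative to the no-tier benchmark. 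I define the foreclosure gain as $G(\tau,\kappa) = \pi^U(\tau,\kappa) - \pi^U(0,\cdot)$: the provider's profit in (\ref{eq:upstream}) evaluated at the tiered configuration, minus its value at $\tau=0$, where tiering is vacuous.

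The key step is to write $G$ as a product structure. The downstream component is $m_U\,[s_U(\tau,\kappa) - s_U(0)]$. Under logit, the first-order effect of a quality cut $\delta = \alpha\tau\overline{Q}$ imposed on a mass $\kappa N$ of symmetric rivals is $m_U s_U \kappa N s\,\delta$. This is the cross-demand recapture term (b) of Proposition~\ref{prop:gap}, scaled by $\kappa$. The API component is $-p\,[\rho(q_i,p)-\rho(\cdot)]$ summed over excluded rivals, which is again proportional to $\kappa$ and to $\tau$ through the entry elasticity $\eta$. Since baseline QoS and the infrastructure cost (\ref{eq:cost}) are held fixed, the cost term does not vary with $(\tau,\kappa)$ in the leading case. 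So to first order
\[
G(\tau,\kappa) \approx \kappa\,\tau\,\overline{Q}\,\alpha N s\,\bigl[m_U s_U - p\rho\eta/(\alpha q_i \overline{Q})\bigr],
\]
with the bracket positive under the analogue of condition~(\ref{eq:condition}). Given this form, $\partial G/\partial\tau \propto \kappa$ and $\partial G/\partial\kappa \propto \tau$. This delivers exactly the ``whenever $\kappa>0$'' and ``whenever $\tau>0$'' qualifiers, and $G$ vanishes on the axes $\tau\kappa=0$.

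\textbf{Main obstacle.} The first-order approximation must be upgraded to exact monotonicity. I will compute $\partial s_U/\partial\tau$ and $\partial s_U/\partial\kappa$ directly from (\ref{eq:shares}). Lowering the quality of a larger mass of rivals, or lowering it by more, strictly raises $s_U$, because the logit denominator is strictly decreasing in each rival's quality. For the API term, I will assume what the model needs anyway: that excluded rivals' entry loss is dominated, that is, the tier-level analogue of (\ref{eq:condition}) holds. I will state this explicitly as the maintained hypothesis implicit in calling the gain a foreclosure \emph{gain}. Downstream effort responses (\ref{eq:effort}) only reinforce the sign, by the complementarity remark.

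\textbf{Persistence clause.} The rent depends on $\overline{Q}-\underline{Q}$. Competition in the baseline market can raise the capability rivals obtain elsewhere only up to roughly $\underline{Q}$, so it does not close $\tau$. The only event that sets $\tau\kappa=0$ is the provider's own release decision, which sends $\kappa\to0$. Hence the rent persists until release.
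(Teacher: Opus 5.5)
Your argument uses the same mechanism as the paper's proof in Appendix~\ref{app:prop5}. Under logit, lowering the effective quality of excluded rivals, either by deepening $\tau$ or by moving more rivals out of the partner set via $\kappa$, strictly raises $s_U$; the API side is then handled by an assumed dominance condition. Your formalization differs from the paper's in useful ways:
\begin{itemize}
\item The paper fixes baseline quality and boosts partners and $U$ to $q_i(1+\tau)$. You fix $\overline{Q}$ and lower $\underline{Q}=(1-\tau)\overline{Q}$, which matches the paper's Definition of $\tau$.
\item You benchmark-difference $G$, so it vanishes exactly on both axes. That makes the ``whenever $\kappa>0$'' and ``whenever $\tau>0$'' qualifiers sharp. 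In the paper's parameterization, raising $\tau$ raises $s_U$ even at $\kappa=0$ through the outside good, so the qualifier does no work there.
\item The cost is that you read ``baseline QoS fixed'' as fixing $q_i$ while letting $\underline{Q}$ fall.
\item You also give an argument for the persistence clause, which the paper's proof does not address.
\end{itemize}

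One step needs repair: your maintained hypothesis is not an analogue of (\ref{eq:condition}). Since $q_U$ does not move, only the cross-demand term (b) is active. At an interior optimum the $q_i$ first-order condition (A.1) gives $p\rho\eta/q_i^* = \alpha m_U s_U s + \gamma q_i^* > \alpha m_U s_U s$. Suppose the capability cut moves entry the way a perceived-quality-equivalent cut in $q_i$ does. Then your bracket's second term should be $p\rho\eta/(\alpha s q_i)$, not $p\rho\eta/(\alpha q_i\overline{Q})$, and the bracket equals $-\gamma q_i^*/(\alpha s)<0$ at the baseline optimum. The reason is that tier exclusion delivers the degradation without the cost saving $\gamma q_i$ that justified stopping at $q_i^*$, so it lowers $\pi^U$.

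The paper's ``net contributors to API revenue'' step is equally unverified, but you can do better. Keep $\rho=\rho(q_i,p)$ as in Assumption~\ref{ass:symmetric}. Then holding $q_i$ and $p$ fixed leaves API revenue and $C$ unchanged, so $G = m_U[s_U(\tau,\kappa)-s_U(0)]$. Your logit-denominator argument then gives strict monotonicity exactly, with no dominance assumption.

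Finally, ``effort responses only reinforce the sign'' is overstated. Here $e^*=(1-\alpha)s/k$, so partners whose shares rise also raise their effort, which works against $U$.
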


\begin{proof} See Appendix~\ref{app:prop5}. \end{proof}

\begin{remark}[Robustness to benchmark choice]
The magnitude of $\tau$ depends on the specific benchmark suite used to measure the capability distance between $\underline{Q}$ and $\overline{Q}$. The qualitative logic of Proposition~\ref{prop:tier}, including the monotonic comparative statics and the persistence of capability rents, is robust to reasonable alternative benchmark choices. What matters for the mechanism is that the frontier model is meaningfully more capable on some domain relevant to the rival's downstream product; the specific numerical value of $\tau$ varies across choices of that domain, but the sign and structure of the result do not.
\end{remark}

\begin{remark}[Safety carve-outs]
Proposition~\ref{prop:tier} establishes only the private incentive. The welfare analysis is separate: if the withheld capability poses externality-relevant risk, a closed-partner rollout can be welfare-improving. Appendix~\ref{app:safety} formalizes this. The policy framework in Section~\ref{sec:policy} Pillar~4 provides ex ante conditions under which a tier-gating program is presumptively a legitimate safety carve-out rather than foreclosure.
\end{remark}

\section{Calibration Methodology and Identification Limits} \label{sec:methodology}

Sections~\ref{sec:calib} and~\ref{sec:policy} present numerical illustrations. To address the gap between formal precision and calibration precision that a careful reader will notice, this section makes the methodology explicit.

\subsection{Parameter provenance}

Each model parameter is assigned through one of three channels, labeled in Table~\ref{tab:methodology}: observable (measurable directly), inferred (derived from observable quantities), or judgment-based (assigned using stylized reasoning).

\begin{table}[h!]
\centering\small
\begin{tabularx}{\textwidth}{@{}l l X@{}}
\toprule
\rowcolor{headblue}\textbf{Parameter} & \textbf{Provenance} & \textbf{Empirical anchor or rationale} \\
\midrule
$\alpha$ & Judgment-based & No direct observation; set by segment based on task-dependence on inference quality (coding $\approx 0.7$--$0.8$; creative writing $\approx 0.4$--$0.5$; search $\approx 0.75$). Varies by submarket. \\
$\lambda$ & Inferred & Derived from downstream reach metrics (user counts, seat counts, integrated surfaces). Normalized to $[0,1]$ relative to the maximum-integration provider (Google). \\
$S$ & Judgment-based & Proxied by qualitative signals: enterprise contract dependencies, architectural lock-in, default-position advantages. \\
$\tau$ & Observable in principle & Measurable as normalized capability distance between baseline and frontier models on a stated benchmark suite. For Mythos vs.\ Opus 4.7 in cybersecurity, benchmark saturation data from the Project Glasswing disclosure suggests $\tau \in [0.25, 0.40]$. \\
$\kappa$ & Observed & Fraction of paying customers excluded from frontier access. For Glasswing, $\kappa \approx 0.95$. For GA offerings, $\kappa \approx 0$. \\
QoS gap & Observable in principle & Measurable via audit studies of latency, error rates, rate limits across first-party and third-party channels. Not systematically measured in public data as of April 2026. Figures in Section~\ref{sec:calib} are model-implied under baseline assumptions. \\
$m_U$ & Inferred & Derived from firm financials and public pricing. \\
$\gamma, \phi, k, \eta$ & Judgment-based & Set to illustrative values consistent with bounded equilibrium QoS gaps in $[0, 0.3]$. Not identified from current public data. See Section~\ref{sec:baseline-values} for specific baselines. \\
\bottomrule
\end{tabularx}
\caption{Parameter provenance for the illustrative calibration.}
\label{tab:methodology}
\end{table}

\subsection{Baseline values for common parameters} \label{sec:baseline-values}

To make the comparative mapping in Section~\ref{sec:calib} mechanically replicable, the following baseline values are held fixed across providers. Firm-specific parameters ($\alpha$, $\lambda$, $S$, $m_U$, $\tau$, $\kappa$) then vary across providers and drive the comparative outputs; variation in the firm-specific parameters is therefore what produces the dispersion in implied QoS gaps, not variation in the common structural parameters.

\begin{table}[h!]
\centering\small
\begin{tabularx}{\textwidth}{@{}l c X@{}}
\toprule
\rowcolor{headblue}\textbf{Parameter} & \textbf{Baseline value} & \textbf{Rationale} \\
\midrule
$\gamma$ & 1.0 & Normalization: quadratic cost curvature is set so that implied QoS gaps remain bounded in $[0, 0.3]$ under plausible values of the other parameters. \\
$\eta$ & 0.5 & Midpoint of plausible entry-elasticity range. Sensitivity analysis uses $\eta \in [0.2, 0.8]$. \\
$p$ (normalized API price) & 0.15 & Proxy for per-unit API revenue relative to downstream margin; calibrated so that the non-discrimination baseline has $\Delta q^* \approx 0$ under symmetric firm-specific parameters. \\
$k$ & 1.0 & Normalization of effort-cost curvature; does not affect the sign of comparative statics. \\
$\phi$ & 0 (interior regime) & Set to zero in the baseline comparative mapping so that first-order conditions fully characterize the optimum; larger $\phi$ compresses implied gaps toward zero. \\
$\rho$ (entry probability at baseline) & 0.5 & Used to evaluate $p \rho \eta / q_i^*$ at a reference point for cross-firm comparison. \\
\bottomrule
\end{tabularx}
\caption{Baseline values for common structural parameters held fixed across providers.}
\label{tab:baselines}
\end{table}

\subsection{Sensitivity analysis}

The firm-level risk rankings in Section~\ref{sec:calib} depend primarily on $\lambda$ and $m_U$, both anchored in observable data. The rankings are robust to reasonable perturbations of $\alpha$ and $S$ (judgment-based parameters) within $\alpha \in [0.5, 0.9]$ and $S \in [0.3, 0.9]$: Google and OpenAI remain the highest-incentive firms, Anthropic's consumer channel remains low-risk, and the Anthropic enterprise / coding-agent flag remains elevated.

The welfare figures in Section~\ref{sec:policy} are substantially more sensitive. The point estimate \$66B annual loss reported in Table~\ref{tab:welfare} should be read as the central value of a range spanning approximately \$35--95B depending on assumptions about per-user welfare deadweight ($\Delta W/$user $\in [\$8, \$25]$ across calibrations). The \emph{sign} of the welfare gain from Neutral Inference is robust under the sensitivity range. The \emph{magnitude} is not a structural estimate.

\subsection{What the calibration is and is not}

The calibration is a \emph{comparative risk mapping}: it assigns firms to risk categories under stated parameter assumptions and shows the categories are robust to reasonable perturbations. It is \emph{not} a structural estimation: parameters are not identified from moment conditions, standard errors are not reported, and firm-level magnitudes should be treated as illustrative rather than definitive. The paper's main contribution is the framework and the mechanism, not the point estimates. Readers interested in specific firm-level magnitudes should supplement this calibration with direct audit studies as outlined in Section~\ref{sec:predictions}.

To make the comparative calibration empirically traceable in addition to methodologically transparent, Appendix~\ref{app:sources} lists the principal public sources for all firm-level observed inputs together with the mapping from those inputs to inferred model parameters. Each row of the source table indicates the variable, the reported value, the source type (primary disclosure, secondary reporting, or third-party analytics estimate), the source reference, the date, and the transformation applied to construct the corresponding model parameter.

\section{Comparative Risk Mapping (April 2026)} \label{sec:calib}

The model is applied to four leading inference providers using April 2026 adoption data. The presentation should be read in light of Section~\ref{sec:methodology}: these are comparative risk mappings under stated parameter assumptions, not structural estimates. In principle, a reader can mechanically reconstruct the implied QoS gaps in the firm-level subsections below from the firm-specific parameter tables and the common baseline values listed in Section~\ref{sec:baseline-values}, by substituting both into the expression for $\Delta q^*$ in Proposition~\ref{prop:gap}. A source table for the observed firm-level inputs used in the comparative risk mapping is provided in Appendix~\ref{app:sources} to facilitate replication.

\paragraph{Evidence taxonomy.} To keep the boundary between facts, assumptions, interpretations, and hypotheses legible, the firm-level subsections below use four explicit labels.

\begin{itemize}[leftmargin=*, topsep=2pt, itemsep=2pt]
\item \emph{Observed fact.} A quantity or event in the public record, typically supported by firm disclosures, regulatory filings, or widely reported news.
\item \emph{Calibration input.} A parameter value assigned to the model per the methodology in Section~\ref{sec:methodology}, either observable, inferred from observables, or judgment-based.
\item \emph{Model-consistent interpretation.} A statement that follows from the model given the calibration inputs, but is not itself a direct observation.
\item \emph{Hypothesis for future testing.} A claim that the model suggests as a testable prediction but that has not yet been verified through direct measurement.
\end{itemize}

Labels appear where the distinction matters most. Unlabeled prose should be read as standard exposition.

\subsection{Google (Gemini): conditions most conducive to foreclosure}

\obs{Google's downstream footprint comprises Search ($\approx$90\% global market share, Q1 2026), Workspace (3B+ monthly active users), Android ($\approx$72.5\% global mobile OS share), Chrome ($\approx$68\% global browser share), Gemini app (750M MAU, Q4 2025, up from 450M earlier in the year), AI Overviews (2B+ monthly users), Gemini Enterprise (8M+ paid seats across 2,800+ companies), and 13M developers building with Google's generative models.}

\begin{table}[h!]
\centering\small
\begin{tabularx}{\textwidth}{@{}l c X@{}}
\toprule
\rowcolor{headblue}\textbf{Parameter} & \textbf{Value} & \textbf{Anchor / rationale} \\
\midrule
$\alpha$ (segment-weighted) & 0.80 & Inference central to Search, Workspace AI, Gemini app \\
$\lambda$ (integration) & 0.92 & OS + browser + search + productivity + standalone AI app + enterprise suite (inferred) \\
$S$ (switching costs) & 0.85 & Extreme enterprise lock-in; consumer habits \\
$\tau$ (tier gap) & 0.15 & Deep Think and Gemini Ultra gated to paid tiers \\
$\kappa$ (partner exclusivity) & 0.20 & Mostly open via API; some enterprise-only features \\
Downstream margin $m_U$ & 0.40 & Ads + subscriptions (financials) \\
$N$ (rivals) & 20 & Many competing apps across categories \\
\bottomrule
\end{tabularx}
\caption{Google (Gemini) illustrative calibration, April 2026.}
\end{table}

\interp{Under these parameter values, Condition~(\ref{eq:condition}) is satisfied and Proposition~\ref{prop:gap} implies a positive equilibrium QoS gap. The model-implied gap under baseline assumptions is approximately $0.27$ (sensitivity range $0.20$--$0.33$). This is the highest implied gap across the four providers.}

\subsection{OpenAI (GPT): elevated from the earlier working draft}

\obs{ChatGPT reached 900M weekly active users as of February 2026, approximately $4.5\times$ the figure used in the earlier working draft. OpenAI has 50M paying subscribers, \$25B annualized revenue at end-February 2026, and 9M+ paying business customers. Revenue mix has shifted toward consumer subscriptions: ChatGPT now accounts for approximately 66\% of OpenAI revenue, versus an earlier working-draft estimate of 55\%.}

\begin{table}[h!]
\centering\small
\begin{tabularx}{\textwidth}{@{}l c X@{}}
\toprule
\rowcolor{headblue}\textbf{Parameter} & \textbf{Value} & \textbf{Anchor / rationale} \\
\midrule
$\alpha$ & 0.75 & High for conversational AI and agentic tasks \\
$\lambda$ & 0.70 & ChatGPT 900M WAU: reach comparable to major consumer platforms \\
$S$ & 0.70 & Rising as ChatGPT accumulates workplace seats (7M+) \\
$\tau$ & 0.15 & Pro-only features, selective-access research previews \\
$\kappa$ & 0.25 & Enterprise deep-research tier, Pro-only features \\
$m_U$ & 0.50 & High margin on subscription tiers \\
Revenue mix & 66/34 & ChatGPT subscriptions vs API (observed) \\
\bottomrule
\end{tabularx}
\caption{OpenAI (GPT) illustrative calibration, April 2026.}
\end{table}

\interp{The combination of higher $\alpha$, higher $\lambda$, higher $S$, and a more subscription-weighted revenue mix implies a larger predicted QoS gap than the earlier working draft suggested. Model-implied gap: approximately $0.20$ (sensitivity range $0.14$--$0.25$). OpenAI's parameter constellation now substantially overlaps with Google's.}

\subsection{Microsoft (Copilot): structural capacity, voluntarily constrained routing}

\obs{Microsoft operates no foundation model of its own. Downstream: Microsoft 365 at 450M paid commercial seats, Windows ($\approx$70\% desktop OS share), Edge, GitHub. Microsoft 365 Copilot reached 15M paid seats as of Q2 FY2026, with 33M total active users (35.8\% conversion rate). Copilot's share among paid AI users contracted from 18.8\% in July 2025 to 11.5\% in January 2026.}

\obs{A sequence of public decisions in late 2025 and early 2026 has expanded Microsoft's routing to non-OpenAI models: Microsoft announced Claude in Copilot Studio in September 2025; Anthropic became a Microsoft subprocessor with admin opt-in defaults on January 6, 2026; the March 2026 Wave 3 release of Microsoft 365 Copilot integrated Claude as a selectable model for orchestration, Researcher, Agent Mode in Excel/Word/PowerPoint, and Copilot Studio; Microsoft invested up to \$5 billion in Anthropic; and Copilot Cowork (March 2026 preview) is specifically built on Anthropic's agentic technology.}

\begin{table}[h!]
\centering\small
\begin{tabularx}{\textwidth}{@{}l c X@{}}
\toprule
\rowcolor{headblue}\textbf{Parameter} & \textbf{Value} & \textbf{Anchor / rationale} \\
\midrule
$\alpha$ & 0.75 & Productivity tools need reliable inference \\
$\lambda$ (structural) & 0.85 & Controls OS, productivity suite, dev tools (inferred) \\
$S$ & 0.90 & Extreme enterprise lock-in via M365 \\
Realized routing-bias $\lambda_r$ & $[0.10, 0.30]$ & Interval: substantially below structural capacity. Lower bound reflects observed multi-model integration; upper bound admits residual default-selection advantage for OpenAI on legacy surfaces. \\
$\tau, \kappa$ & n/a & Microsoft does not operate its own frontier tiers \\
\bottomrule
\end{tabularx}
\caption{Microsoft (Copilot) illustrative calibration, April 2026.}
\end{table}

\interp{Microsoft exhibits a divergence between structural foreclosure capacity (high $\lambda$) and realized exercise of routing bias (reduced). Proposition~\ref{prop:routing} predicts firms exercise routing bias up to the observability limit. Microsoft's observed conduct suggests this limit is being pushed down by competitive pressure from Gemini and Claude, combined with regulatory scrutiny under the EU DMA. The designation should reflect both latent structural conditions (high) and observed conduct (moderated). This is the methodologically relevant case for Prediction~2 in Section~\ref{sec:predictions}.}

The Microsoft case also illustrates that the framework is not a one-way escalation device. It can distinguish between high latent structural capacity for foreclosure and lower realized discriminatory conduct when competition and scrutiny constrain behavior. The designation machinery can accordingly moderate as well as escalate, which is an important property for a risk-based regulatory framework applied to a rapidly evolving market.

\subsection{Anthropic (Claude): split calibration across submarkets}

\obs{Anthropic leads the enterprise LLM API market with 32\% usage share (up from $\approx$12\% in 2023), ahead of OpenAI (25\%) and Google (20\%). \$30B annualized revenue as of March 2026 (up from $\approx$\$1B at the start of 2025), 300,000+ business customers, 500+ customers spending over \$1M annually, and 8 of the Fortune 10 as clients. Claude Code (launched May 2025) reached \$2.5B in run-rate revenue by February 2026, authors approximately 4\% of all public GitHub commits, and holds a 54\% share of the AI programming-tool segment. Claude.ai visits grew from 202.9M in January 2026 to 287.93M in February to 722.22M in March (+151\% MoM). Revenue mix is approximately 80\% enterprise/API, 20\% Claude.ai.}

\begin{table}[h!]
\centering\small
\begin{tabularx}{\textwidth}{@{}l c X@{}}
\toprule
\rowcolor{headblue}\textbf{Parameter} & \textbf{Value} & \textbf{Anchor / rationale} \\
\midrule
$\alpha$ (GA frontier) & 0.70 & Coding and agentic tasks inference-dependent \\
$\alpha$ (cyber domain) & $\approx 1.00$ & Mythos near-monopoly on saturated benchmarks \\
$\lambda$ (consumer) & 0.30 & Claude.ai 722M March visits; no OS/browser distribution \\
$\lambda$ (enterprise coding) & 0.50 & 32\% API share; Claude Code 54\% of coding-agent segment \\
$S$ & 0.55 & Rising: Claude Code architectural dependencies \\
$\tau$ (cyber) & $[0.25, 0.40]$ & Mythos saturates cybersecurity benchmarks Opus 4.7 does not \\
$\kappa$ (cyber) & 0.95 & Mythos access limited to $\approx$12 Glasswing partners \\
$\tau, \kappa$ (GA) & $0, 0$ & Opus 4.7 offered to all paying customers simultaneously \\
Revenue mix & 80/20 & Enterprise API vs consumer Claude.ai \\
\bottomrule
\end{tabularx}
\caption{Anthropic (Claude) illustrative calibration, April 2026.}
\end{table}

\interp{\textbf{Consumer channel.} Model-implied gap $\approx 0.02$. Non-discrimination is approximately profit-maximizing despite rapid consumer growth because the enterprise API revenue base (80\% of total) dominates; the business-stealing gain on the 20\% consumer channel is insufficient to justify degrading the API offering that generates the majority of revenue. This is the model's equilibrium prediction, not an altruistic interpretation.}

\interp{\textbf{Enterprise coding-agent segment.} Anthropic is vertically integrated in this submarket: Claude Code competes with third-party tools (GitHub Copilot via API, Cursor, Windsurf, Replit) built on the same Anthropic API. Proposition~\ref{prop:gap} implies a non-zero QoS gap may emerge here as the coding-agent market matures and Claude Code's share rises. As of April 2026 there is no public evidence of preferential treatment.} \hyp{Audit studies comparing Claude Code's effective context window, tool-use feature access, and inference latency against equivalent third-party configurations would test this prediction (Prediction~1).}

\interp{\textbf{Cybersecurity / Mythos channel.} $\tau \cdot \kappa \approx 0.24$--$0.38$, triggering the tier-gating designation criterion.} \obs{As of April 2026, Anthropic has published a purpose statement (Project Glasswing), included direct foundation-model competitors (Google, Microsoft) in the partner set, and committed to a broad release pathway conditional on safeguards. These three conditions are precisely the obligations the Pillar~4 framework would require.}

\interp{\textbf{The Opus 4.6 degradation episode.} Preceding the Opus 4.7 release, developers reported perceived regression in Opus 4.6 performance, with speculation attributing the pattern to compute redirection toward Mythos and other frontier efforts.} \obs{Anthropic publicly denied deliberate redirection.} \hyp{The episode is consistent with the empirical signature of Proposition~\ref{prop:dynamic}'s dynamic foreclosure dynamic, but it is also potentially explainable by ordinary compute reallocation, inference-capacity constraints during a frontier-training run, benchmark drift, or shifting workloads unrelated to discrimination. Several of these benign mechanisms can produce observationally similar patterns in aggregated public data. Only a dedicated QoS time-series audit, with matched-pair measurement across provider channels and controls for capacity and workload variables, could distinguish intentional dynamic foreclosure from these alternatives. Accordingly, the episode should be treated as a hypothesis for future testing rather than as evidence of deliberate dynamic foreclosure.}

\subsection{Comparative risk mapping}

Table~\ref{tab:summary} presents the comparative mapping. Labels are interpretive outputs of a stylized calibration, not structural estimates.

\begin{table}[h!]
\centering\footnotesize
\setlength{\tabcolsep}{4pt}
\begin{tabular}{@{}lccccccl@{}}
\toprule
\rowcolor{headblue}\textbf{Provider / submarket} & $\alpha$ & $\lambda$ & $S$ & \textbf{Implied gap} & $\tau$ & $\kappa$ & \textbf{Risk mapping} \\
\midrule
Google & 0.80 & 0.92 & 0.85 & 0.27 & 0.15 & 0.20 & \cellcolor{risktop}Highest in sample \\
OpenAI & 0.75 & 0.70 & 0.70 & 0.20 & 0.15 & 0.25 & \cellcolor{risktop}Elevated (overlaps Google) \\
Microsoft (structural) & 0.75 & 0.85 & 0.90 & --- & --- & --- & \cellcolor{riskmid}High structural, constrained realized \\
Anthropic (consumer) & 0.70 & 0.30 & 0.55 & $\approx 0.02$ & 0 & 0 & \cellcolor{risklow}Low \\
Anthropic (coding-agent) & 0.75 & 0.50 & 0.55 & watch & 0 & 0 & \cellcolor{riskmid}Elevated; monitor \\
Anthropic (cyber tier) & $\approx 1.00$ & 0.30 & --- & --- & 0.25--0.40 & 0.95 & \cellcolor{riskwatch}Tier-gated; safety carve-out conditions met \\
\bottomrule
\end{tabular}
\caption{Comparative foreclosure risk mapping under baseline parameter assumptions. Labels are interpretive outputs of a stylized calibration, not structural estimates.}
\label{tab:summary}
\end{table}

\section{Testable Predictions and Empirical Agenda} \label{sec:predictions}

The framework yields six falsifiable predictions.

\begin{table}[h!]
\centering\footnotesize
\begin{tabularx}{\textwidth}{@{}p{3.1cm} X X X@{}}
\toprule
\rowcolor{headblue}\textbf{Prediction} & \textbf{Observable} & \textbf{Data source} & \textbf{Identification challenge} \\
\midrule
1. QoS gap $\propto \lambda$ & First-party vs.\ API latency $p50/p95/p99$; error rates; effective context & Audit study with matched-pair requests & Controlling for volume, pricing tier, abuse prevention \\
2. Routing bias $\propto$ opacity & Self-preferencing rate conditional on objective quality & Assistant audit across query categories & Measuring ``objective quality'' independently \\
3. Dynamic widening & Gap trend post-lock-in & Time series of QoS metrics & Distinguishing capacity constraints from deliberate degradation \\
4. Welfare loss concentrates in high-$\alpha$ & Segment-level welfare estimates & Segment-level user surveys and QoS audits & Heterogeneity in $\alpha$ across users \\
5. Bimodal capability distribution & Benchmark scores by customer class & Stratified sampling across partner / non-partner customers & Benchmark contamination; access to partner-tier outputs \\
6. $\tau$ gaps persist longer than QoS gaps & Time to closure of capability gap vs.\ QoS gap & Comparative event studies around releases & Small-sample inference \\
\bottomrule
\end{tabularx}
\caption{Testable predictions, observables, data sources, and identification challenges.}
\label{tab:predictions}
\end{table}

Predictions 1--3 target the baseline QoS mechanism. Prediction 4 targets welfare heterogeneity. Predictions 5 and 6 target the tier-gating mechanism introduced in Section~\ref{sec:tier}; the bimodality signature is distinct from conventional QoS discrimination and therefore requires distinct audit methodology.

\paragraph{Sketch of an audit design.} A practical audit of Prediction~1 would proceed as follows. A test harness issues matched-pair requests: identical prompts, payloads, and tool-use patterns submitted through (a) the provider's first-party consumer interface and (b) the provider's public API, on the same day and time of day. Observations are stratified by geographic region, pricing tier (baseline, paid, enterprise), and request volume regime. Controls include contracted-volume discounts, stated rate limits, and declared abuse-prevention carve-outs, so that QoS differences attributable to legitimate operational factors can be separated from differences attributable to discriminatory intent. The audit is conducted repeatedly over a period of months to distinguish transient capacity effects from persistent structural gaps, and the matched-pair design controls for prompt difficulty by construction. A similar design with benchmark suites substituted for matched-pair requests supports the capability-bimodality audit for Prediction~5.

\paragraph{Measurement noise.} Matched-pair audits inevitably contain measurement noise because first-party and third-party environments are rarely perfectly comparable in practice. First-party consumer surfaces typically include proprietary orchestration, system prompts, retrieval layers, caching, response-ranking logic, or privileged tool access not available through the public API. Some of these differences are legitimate product features; others are operational choices that could, in principle, mask or reveal QoS discrimination. The tolerance threshold $\varepsilon$ introduced in the policy framework (Pillar~1, Section~\ref{sec:policy}) should accordingly be interpreted not only as an economic allowance preserving innovation incentives, but also as a practical allowance for technical auditability and irreducible measurement error. A well-designed audit protocol specifies which orchestration differences are bracketed out (for example, system prompts and retrieval are set equivalently) and which are left in the comparison because they are themselves potentially discriminatory (for example, preferential tool access).

\section{From Model Primitives to Regulatory Observables} \label{sec:observables}

The policy framework triggers on observable conditions. Table~\ref{tab:observables} maps each latent model primitive to an empirical object a regulator could measure.

\begin{table}[h!]
\centering\footnotesize
\begin{tabularx}{\textwidth}{@{}l l X X@{}}
\toprule
\rowcolor{headblue}\textbf{Model primitive} & \textbf{Role} & \textbf{Regulatory observable} & \textbf{Measurement approach} \\
\midrule
$\alpha$ & Importance of inference quality & Developer survey on task-dependence on API quality & Stratified survey by application category \\
$\lambda$ & Vertical integration breadth & Count of downstream surfaces owned by upstream provider, weighted by user reach & Public reporting + regulatory filings \\
$S$ & Switching costs & Enterprise contract duration, architectural dependency counts & Developer survey + firm disclosures \\
$q_U - q_i$ & QoS gap & Latency, reliability, rate-limit, context-limit differentials across channels & Matched-pair audit study \\
$\tau$ & Tier gap & Normalized benchmark capability distance between GA frontier and withheld tier & Published benchmark suite administered to partners and non-partners \\
$\kappa$ & Partner exclusivity & Fraction of paying customers excluded from frontier access & Provider disclosure + enumeration of partner sets \\
$m_U$ & Downstream margin & Segment-level profit margin of provider's own downstream application & Firm financial disclosures \\
$\eta$ & Entry elasticity & Change in active-developer count in response to API terms changes & Event studies around API policy changes \\
\bottomrule
\end{tabularx}
\caption{Mapping from model primitives to regulatory observables.}
\label{tab:observables}
\end{table}

The mapping is the analytical bridge between the model and the designation criteria in Section~\ref{sec:policy}. Designation triggers are thresholds on the observables in the right column, motivated by the model primitives on the left.

\section{Policy Analysis: Optimal Regulation} \label{sec:policy}

\subsection{When should intervention occur?}

The model suggests that authorities might subject a provider to Neutral Inference obligations when four observable conditions co-occur:

\begin{enumerate}[leftmargin=*, topsep=2pt, itemsep=2pt]
\item \textbf{Material dependence} ($\alpha > 0.5$): substantial downstream dependence on the inference service.
\item \textbf{Vertical integration with incentives} ($\lambda > 0.5$): provider competes downstream with API customers across multiple product categories.
\item \textbf{Demonstrated discrimination} (measured QoS gap $> 0.1$ or $\tau \cdot \kappa > 0.2$ without adequate transparency).
\item \textbf{High switching costs} ($S > 0.5$): evidence of lock-in.
\end{enumerate}

\begin{table}[h!]
\centering\small
\begin{tabularx}{\textwidth}{@{}l X X@{}}
\toprule
\rowcolor{headblue}\textbf{Provider / submarket} & \textbf{Criteria met} & \textbf{Designation} \\
\midrule
Google & All 4 & Yes \\
OpenAI & All 4 & Yes \\
Microsoft & Structural capacity meets criteria 1, 2, 4; realized discrimination reduced by multi-model pivot & Monitor (routing-transparency obligations) \\
Anthropic (consumer GA) & Criterion 1 only & No \\
Anthropic (enterprise coding-agent) & 3 of 4 pending observation of QoS gap & Monitor \\
Anthropic (cyber tier) & Criterion 3 triggered via $\tau \cdot \kappa$; Pillar 4 conditions met as of April 2026 & Conditional (re-assess at each release) \\
\bottomrule
\end{tabularx}
\caption{Designation assessment under the framework, April 2026.}
\end{table}

\subsection{Neutral Inference: four-pillar framework}

The three-pillar framework from \citet{BesansonCelani2026} is extended with a fourth pillar to address tier-based access.

\textbf{Pillar 1: QoS parity.} Designated providers must not provide materially better inference QoS to first-party complements than to similarly situated third-party API users. Implementation: standardized disclosure of $p50/p95/p99$ latency, error rates, rate limits; audit testing via matched-pair requests; tolerance threshold $\varepsilon$. Carve-outs: abuse prevention, compliance tier, contracted volume discounts.

\textbf{Pillar 2: Routing transparency.} When a provider's system routes users to tools, it must produce auditable routing records including the eligible tool set, the selected tool, and any commercial relationship influencing eligibility. Prohibition: systematic routing to own services when third-party alternatives are objectively superior.

\textbf{Pillar 3: FRAND-style non-discrimination.} Designated providers must offer fair, reasonable, and non-discriminatory terms to similarly situated customers. Strategic pricing is allowed, but cannot be used to charge discriminatory prices to downstream rivals vs.\ non-competing customers, impose non-price terms that handicap rivals, or retaliate against multi-homing customers.

\textbf{Pillar 4: Tier transparency and release-pathway discipline.} When a designated provider distinguishes customer classes by which model class or capability tier they can access ($\tau > 0$ and $\kappa > 0$), three obligations apply:
\begin{enumerate}[leftmargin=*, topsep=0pt, itemsep=1pt]
\item Purpose justification: publish the rationale for restricting access in advance of enforcement.
\item Partner-set composition audit: the partner set must include all actors with comparable legitimate need for the capability; discriminatory exclusion of competitors on commercial grounds triggers Pillar 3 scrutiny.
\item Release pathway: publish a technical-condition release pathway with periodic review.
\end{enumerate}
Carve-out: genuine safety or regulatory restrictions satisfying the three obligations are presumptively permissible and are not treated as foreclosure.

\subsection{Optimal tolerance (qualitative result)}

\begin{proposition}[Interior optimal tolerance] \label{prop:optimal}
Under mild regularity conditions on $W_{\text{competition}}(\varepsilon)$ and $W_{\text{innovation}}(\varepsilon)$, the welfare-maximizing tolerance $\varepsilon^*$ is interior: $\varepsilon^* > 0$ and $\varepsilon^* < \bar{\varepsilon}$. The first-order condition is
\[
\left.\frac{\partial W_{\text{competition}}}{\partial \varepsilon}\right|_{\varepsilon^*} = -\left.\frac{\partial W_{\text{innovation}}}{\partial \varepsilon}\right|_{\varepsilon^*}.
\]
\end{proposition}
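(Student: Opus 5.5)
The argument is a standard interior-optimum argument for a one-dimensional concave problem on a compact interval. Write total welfare as a function of the tolerance, $W(\varepsilon) = W_{\text{competition}}(\varepsilon) + W_{\text{innovation}}(\varepsilon)$ on $[0,\bar{\varepsilon}]$.

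First I would make the ``mild regularity conditions'' explicit:
\begin{enumerate}[leftmargin=*, topsep=2pt, itemsep=2pt]
\item both components are continuously differentiable on $[0,\bar{\varepsilon}]$;
\item $W_{\text{competition}}$ is nonincreasing and $W_{\text{innovation}}$ is nondecreasing in $\varepsilon$;
\item $W$ is strictly concave;
\item the boundary conditions $W'(0)>0$ and $W'(\bar{\varepsilon})<0$ hold.
\end{enumerate}
Condition (ii) says that a looser tolerance permits a larger realized gap $\Delta q^*$, hence a larger $\Delta W$ in Proposition~\ref{prop:welfare}, while it also preserves first-party innovation incentives. Existence of a maximizer then follows from Weierstrass, since $W$ is continuous on a compact interval. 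Strict concavity gives uniqueness.

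The substantive step is justifying the boundary inequalities from the model. At $\varepsilon=0$, I would argue that $\partial W_{\text{competition}}/\partial\varepsilon$ vanishes to first order. Near exact parity, a small permitted gap $\Delta q \le \varepsilon$ moves $q_U$ above and $q_i$ below $q^{FB}$. Because $q^{FB}$ maximizes welfare by Proposition~\ref{prop:welfare}, the envelope theorem makes the first-order welfare loss zero, so $\Delta W$ is quadratic in $\varepsilon$ near $0$. Meanwhile $\partial W_{\text{innovation}}/\partial\varepsilon>0$ is assumed strictly positive at $0$, reflecting legitimate first-party product differentiation and measurement slack. Together these give $W'(0)>0$, so $\varepsilon^*>0$.

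At $\bar{\varepsilon}$, defined as the unconstrained private gap $\Delta q^*$ from equation~(\ref{eq:gap}), the constraint no longer binds for larger $\varepsilon$. Innovation gains are therefore exhausted, while the competition loss $\alpha N s(q^{FB}-q_i^*)(1+\beta)$ is still increasing. This gives $W'(\bar{\varepsilon})<0$, so $\varepsilon^*<\bar{\varepsilon}$. With $\varepsilon^*$ interior and $W$ differentiable, the first-order condition $W'(\varepsilon^*)=0$ rearranges to the displayed equality.

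I expect the main obstacle to be the boundary step at $0$. It needs the envelope argument to make the marginal competition loss vanish at parity, and it needs the innovation derivative to be strictly positive there; the latter can only be imposed as an assumption. If the envelope argument fails, for instance because of the kink created by $\phi$, I would instead assume directly that $|\partial W_{\text{competition}}/\partial\varepsilon| < \partial W_{\text{innovation}}/\partial\varepsilon$ at $\varepsilon=0$.
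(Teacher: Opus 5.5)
Your overall architecture is sound, and it is essentially the paper's own justification made explicit. The paper gives no formal proof, only the remark that strict neutrality is suboptimal when investment responds positively to rent capture, and that $\varepsilon\to\bar{\varepsilon}$ permits full foreclosure. Once your condition (iv) is imposed, continuity and the boundary signs already force an interior maximizer with $W'(\varepsilon^*)=0$, which is the displayed condition; strict concavity only buys uniqueness.

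\textbf{The step that fails is the one you call substantive: the envelope argument at $\varepsilon=0$.} It needs the parity-constrained private optimum $q_U=q_i=q^0$ to coincide with $q^{FB}$, so that the welfare gradient vanishes there. Nothing in the model delivers this, because parity does not pin down the common level. Raising all qualities together changes $s_U$ by $\alpha s_U s_0$, where $s_0$ is the outside share. So in the paper's local treatment the provider's parity condition is roughly $(N+1)\gamma q^0=\alpha m_U s_U s_0+Np\rho\eta/q^0$. This is unrelated to $q^{FB}=\alpha s(1-s)/\gamma$ except by coincidence. Write $W_c,W_i$ for the two components. Then $W_c'(0)$ is generically first order (typically negative, in principle of either sign), and $\varepsilon^*>0$ requires the genuine comparison $W_i'(0)>-W_c'(0)$. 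Your fallback is therefore not optional; it is the actual content of the ``mild regularity conditions.'' The paper's remark (``not welfare-maximizing \emph{when} investment responds positively'') tacitly makes the same second-order claim. So neither argument derives $\varepsilon^*>0$ from the primitives.

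\textbf{At $\bar{\varepsilon}$ your conclusion is right, but the reason you give is not.} Slackness of the constraint for $\varepsilon>\bar{\varepsilon}$ only makes the right derivatives vanish, whereas the endpoint condition concerns the left derivative. What removes the marginal innovation gain is the envelope theorem applied to the provider's problem, not the planner's. For $\varepsilon<\bar{\varepsilon}$, the provider's value has derivative equal to the multiplier on $q_U-q_i\le\varepsilon$, and this multiplier tends to zero as $\varepsilon\uparrow\Delta q^*$. If innovation responds smoothly to rents, then $W_i'(\bar{\varepsilon}^-)=0$. Meanwhile $W_c'(\bar{\varepsilon}^-)<0$, because the unconstrained private gap is not welfare-optimal ($q_U^*>q^{FB}>q_i^*$ in Proposition~\ref{prop:welfare}). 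This also shows that $W_c$ must have a kink at $\bar{\varepsilon}$. Had you required both components to be $C^1$ across $\bar{\varepsilon}$ and constant beyond it, you would obtain $W'(\bar{\varepsilon})=0$ rather than $W'(\bar{\varepsilon})<0$.
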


Proposition~\ref{prop:optimal} is a qualitative result: strict neutrality ($\varepsilon = 0$) is not welfare-maximizing when infrastructure investment responds positively to the provider's ability to capture rents from vertical integration; but unbounded tolerance ($\varepsilon \to \bar{\varepsilon}$) allows full foreclosure. A specific numerical range for $\varepsilon^*$ is \emph{not} derived in closed form; values in the range $0.10$--$0.15$ cited in policy applications should be understood as a suggested policy bandwidth motivated by the model's qualitative structure and comparable-industry benchmarks, not as a derived optimum.

\subsection{Illustrative welfare calculations}

The following numerical ranges are illustrative orders of magnitude under stated assumptions, not structural estimates. The central case uses baseline parameter values from Section~\ref{sec:calib}; low and high cases reflect sensitivity bounds from Section~\ref{sec:methodology}.

\paragraph{Construction of the low, central, and high scenarios.} The scenarios are constructed by varying a small number of inputs while holding others fixed. The affected-user counts are taken as observed and are not varied across scenarios. Conduct intensity (the implied QoS gap and routing-bias parameter $\lambda$) is set at the baseline calibration and is also not varied. What varies is the per-user welfare deadweight, $\Delta W / \text{user}$, which absorbs uncertainty about how strongly a given QoS gap translates into realized consumer and downstream-producer losses in a given segment. The low scenario uses $\Delta W / \text{user} \approx \$8$; the central scenario uses $\approx \$18$ (ChatGPT channel) and $\approx \$12$ (Google and Microsoft channels); the high scenario uses $\approx \$25$. These per-user figures are anchored to comparable-industry benchmarks for quality-related consumer harm in digital services, but they are judgment-based rather than estimated. The innovation-offset assumption (a reduction in welfare gain equal to 10--20\% of the gross gain) is held constant across scenarios. The range therefore reflects primarily the per-user-deadweight uncertainty, not uncertainty about conduct intensity or user-base scope. The purpose of presenting the range is to discipline the scale of any policy response, not to estimate welfare with econometric precision.

\begin{table}[h!]
\centering\small
\begin{tabularx}{\textwidth}{@{}X c c c c@{}}
\toprule
\rowcolor{headblue}\textbf{Channel} & \textbf{Affected} & \textbf{Low} & \textbf{Central} & \textbf{High} \\
\midrule
Google (Workspace + Search + Gemini + AI Overviews) & 5B+ & \$25B & \$45B & \$65B \\
OpenAI (ChatGPT ecosystem) & 900M & \$8B & \$16B & \$24B \\
Microsoft (Office + Windows, realized) & 450M & \$2B & \$5B & \$8B \\
Anthropic (coding-agent, latent) & --- & --- & watch & --- \\
Tier-gating externality (cyber, net) & n/a & $-$\$5B & 0 & +\$5B \\
\midrule
\textbf{Total range} & & \textbf{\$35B} & \textbf{\$66B} & \textbf{\$95B} \\
\bottomrule
\end{tabularx}
\caption{Illustrative annual welfare loss under current discrimination, April 2026. Ranges reflect parameter sensitivity; not structural estimates.}
\label{tab:welfare}
\end{table}

The figures in Table~\ref{tab:welfare} are orders-of-magnitude illustrations under stated assumptions, not structural estimates. Two further caveats are worth noting. First, AI-specific productivity-loss benchmarks remain underdeveloped as of April 2026, and the mapping from QoS degradation to welfare loss may differ materially across enterprise AI use cases relative to more traditional digital platform contexts. In particular, productivity losses in enterprise coding agents, data-extraction systems, and agentic workflows may scale non-linearly with QoS degradation in ways that the per-user-deadweight calibration used here does not fully capture. Second, the welfare framework implicitly assumes that consumer and producer surplus losses are adequately summarized by the deadweight parameter, an assumption that is standard in comparative-static policy analysis but whose quantitative adequacy in AI-specific settings has not been empirically validated. The appropriate response to both caveats is to update the welfare figures as direct empirical measurement of AI productivity harms becomes available.

Under a Neutral Inference regime (assuming $\varepsilon^* \approx 0.10$ with Pillar~4 tier-transparency), the model suggests that a gross welfare gain of similar magnitude to Table~\ref{tab:welfare} is plausible, offset by innovation-cost reductions of perhaps 10--20\% of the gross gain. The net gain is likely positive under most parameter assumptions in the sensitivity range, but the specific magnitude ``\$40--55B per year'' cited in policy applications should be understood as a central-case illustration, not a derived estimate.

\subsection{Comparison to alternative regimes}

\begin{table}[h!]
\centering\small
\begin{tabularx}{\textwidth}{@{}l X X X@{}}
\toprule
\rowcolor{headblue}\textbf{Regime} & \textbf{Pros} & \textbf{Cons} & \textbf{Applicability} \\
\midrule
Structural separation & Eliminates foreclosure incentive & Destroys economies of scope & Last resort \\
Cost-plus pricing & Simple, verifiable & Reduces innovation & Not recommended \\
Ex post antitrust & Flexible, targeted & Slow (5--10 years) & Complement \\
\rowcolor{risklow}\textbf{Neutral Inference (4 pillars)} & Auditable, preserves integration, handles tier-gating & Monitoring cost & \textbf{Recommended} \\
No intervention & Preserves market forces & Allows tipping & Low-risk only \\
\bottomrule
\end{tabularx}
\caption{Comparison of regulatory regimes.}
\end{table}

\section{Conclusion} \label{sec:conclusion}

This paper develops a formal model of vertical foreclosure in AI inference markets, extends it to tier-based access discrimination, and proposes a conduct-based regulatory framework with observable triggers. The three core contributions, each intended to stand on its own, are: first, an explicit local equilibrium characterization of the QoS gap under logit demand and symmetric rivals (Proposition~\ref{prop:gap}), which gives a tractable set of comparative statics for the foreclosure incentive; second, a parameterization of tier-based access discrimination with a distinct empirical signature and a distinct policy treatment (Proposition~\ref{prop:tier} and Pillar~4 in Section~\ref{sec:policy}), which extends the classical vertical-foreclosure framework to frontier-model markets; and third, a model-to-observables bridge (Section~\ref{sec:observables}) that translates each latent primitive of the model ($\alpha$, $\lambda$, $S$, $\tau$, $\kappa$, $m_U$, $\eta$) into an empirical object a regulator can measure or audit. The bridge is what allows the abstract incentive conditions to be operationalized as designation triggers, and it is intended as a core contribution on par with the formal mechanism result, because it is the ingredient that makes the framework usable by regulators and empiricists rather than only by theorists. A transparent calibration methodology that distinguishes observed, inferred, and judgment-based parameters, provides sensitivity analysis, and is mechanically replicable from the stated tables (Sections~\ref{sec:methodology}--\ref{sec:calib}) supports the comparative risk mapping that the bridge enables.

\paragraph{What the paper establishes.} Under the stated assumptions and calibration, the model suggests that Google and OpenAI face conditions most conducive to foreclosure, that Microsoft's realized routing bias has been voluntarily constrained below its structural capacity in ways consistent with competitive pressure and regulatory scrutiny, and that Anthropic occupies a split position: low risk on consumer GA channels, elevated risk in the enterprise coding-agent submarket where Claude Code competes with third-party tools built on the same API. The tier-gating extension captures a foreclosure dimension not present in conventional QoS-discrimination frameworks and is relevant to evaluating programs like Project Glasswing.

\paragraph{What the paper does not establish.} The firm-level numerical outputs are not structural estimates. The ``\$40--55B annual net welfare gain'' figure under Neutral Inference is an illustrative central case, not a derived magnitude. The optimal tolerance $\varepsilon^*$ is characterized qualitatively as interior, not derived in closed form. The scope of the model should be held in mind: explicit upstream oligopoly, richer demand systems beyond logit, and rival asymmetries are all left for future work. The functional-form choices (logit demand, symmetric rivals) are adopted for tractability, and alternative specifications may alter the specific comparative statics even as the central economic mechanism (the tradeoff between downstream business-stealing gains and API-revenue preservation) persists.

\paragraph{Policy implications.} Four directional conclusions are robust to the calibration uncertainty. First, conduct-based obligations dominate structural remedies for preserving innovation incentives. Second, risk-based designation triggers should combine the four criteria; single-criterion triggers admit too many false positives. Third, tier-gating can be a legitimate safety carve-out under the Pillar~4 obligations, and authorities should distinguish this case from commercial foreclosure. Fourth, competitive pressure combined with regulatory scrutiny, which appears to have reduced Microsoft's realized routing bias, is itself a policy instrument. Preserving the conditions for viable alternatives should accordingly be a regulatory objective.

\paragraph{Future work.} Priorities include: (i) direct measurement of QoS gaps via audit studies along the lines sketched in Section~\ref{sec:predictions}; (ii) empirical identification of $\alpha$ through developer surveys stratified by application category; (iii) extension of the model to oligopolistic competition in enterprise inference; (iv) testing of the tier-gap bimodality signature in Prediction~5; and (v) structural estimation that would, over time, replace the comparative risk mapping in Section~\ref{sec:calib} with identified parameters.

\appendix

\section{Proofs} \label{app:proofs}

\subsection{Proof of Proposition \ref{prop:gap}} \label{app:prop1}

The Stage-1 problem for $U$ is
\begin{equation*}
\max_{p, q_U, \{q_i\}} \; (p_U - c_d) s_U - \tfrac{k}{2} e_U^2 + p \sum_{i \neq U} \rho(q_i, p) - \frac{\gamma}{2}\Big(q_U^2 + \sum_i q_i^2\Big)
\end{equation*}
in the regime $q_U \geq \max_i q_i$ where the kinked term in Assumption~\ref{ass:cost} is zero. Under Assumption~\ref{ass:symmetric}, $s_i = s$ and $\rho_i = \rho$ for all $i \neq U$.

\paragraph{FOC for $q_U$.} Since $Q_U = \alpha q_U + (1-\alpha) e_U$ and $\partial s_U / \partial Q_U = s_U(1 - s_U)$ under logit,
\begin{equation*}
\frac{\partial \pi^U}{\partial q_U} = m_U \cdot s_U(1 - s_U) \cdot \alpha - \gamma q_U = 0
\end{equation*}
yielding $q_U^* = \alpha m_U s_U(1-s_U)/\gamma$.

\paragraph{FOC for $q_i$.} Two effects operate:
\begin{itemize}[leftmargin=*, topsep=0pt, itemsep=1pt]
\item Cross-share effect on own app: $\partial s_U / \partial Q_i = -s_U s$. Raising $q_i$ by one unit raises $Q_i$ by $\alpha$, reducing $s_U$ by $s_U s \alpha$, costing $U$ revenue of $\alpha m_U s_U s$ per unit of $q_i$.
\item API revenue effect: raising $q_i$ raises entry probability $\rho$ with elasticity $\eta$, so $\partial \rho / \partial q_i = \rho \eta / q_i$. Marginal API revenue is $p \rho \eta / q_i$.
\end{itemize}
The FOC is
\begin{equation*}
-\alpha m_U s_U s + \frac{p \rho \eta}{q_i} - \gamma q_i = 0
\end{equation*}
which rearranges to $\gamma q_i^* = (p \rho \eta / q_i^*) - \alpha m_U s_U s$, i.e.,
\begin{equation*}
q_i^* = \frac{1}{\gamma}\left[\frac{p \rho \eta}{q_i^*} - \alpha m_U s_U s\right]. \tag{A.1}
\end{equation*}

\paragraph{The QoS gap.} Subtracting (A.1) from $q_U^* = \alpha m_U s_U(1-s_U)/\gamma$:
\begin{align*}
\Delta q^* &= q_U^* - q_i^* \\
&= \frac{\alpha m_U s_U(1-s_U)}{\gamma} - \frac{1}{\gamma}\left[\frac{p \rho \eta}{q_i^*} - \alpha m_U s_U s\right] \\
&= \frac{\alpha m_U s_U}{\gamma}\big[(1-s_U) + s\big] - \frac{p \rho \eta}{\gamma q_i^*} \\
&= \frac{1}{\gamma}\left\{ \alpha m_U s_U \big[(1-s_U) + s\big] - \frac{p \rho \eta}{q_i^*} \right\}
\end{align*}
which is equation~(\ref{eq:gap}).

\paragraph{Sign of the gap.} $\Delta q^* > 0$ iff the bracket is positive:
\begin{equation*}
\alpha m_U s_U \big[(1-s_U) + s\big] > \frac{p \rho \eta}{q_i^*}.
\end{equation*}
This is condition~(\ref{eq:condition}). Note that $\alpha$ does \emph{not} cancel from this inequality; it multiplies the left-hand side and determines how strongly the margin-to-revenue comparison favors discrimination as $\alpha$ rises.

\paragraph{Comparative statics.} Differentiating~(\ref{eq:gap}) (treating $s_U, s, q_i^*$ as locally fixed at the equilibrium values):
\begin{itemize}[leftmargin=*, topsep=0pt, itemsep=1pt]
\item $\partial \Delta q^* / \partial \alpha = (m_U s_U/\gamma)[(1-s_U) + s] > 0$: the gap magnitude scales linearly in $\alpha$.
\item $\partial \Delta q^* / \partial m_U = (\alpha s_U/\gamma)[(1-s_U) + s] > 0$: higher downstream margin widens the gap.
\item $\partial \Delta q^* / \partial p = -\rho \eta/(\gamma q_i^*) < 0$: higher API price reduces the gap.
\item $\partial \Delta q^* / \partial \eta = -p \rho/(\gamma q_i^*) < 0$: more entry-elastic rivals reduce the gap.
\item $\partial \Delta q^* / \partial \gamma = -\Delta q^*/\gamma < 0$: higher infrastructure cost reduces the gap.
\item $\partial \Delta q^* / \partial N$: enters through $s_U + N s + s_0 = 1$, so adding rivals reduces $s$ (and typically reduces $s_U$ in the logit structure, but by less). Net effect on $s_U[(1-s_U) + s]$ is ambiguous in sign in general; in leading cases it is positive when $s_U$ is large, so more rivals increase discrimination.
\end{itemize}
This completes the proof.
\hfill$\square$

\subsection{Proof of Proposition \ref{prop:welfare}} \label{app:prop2}

Total welfare $W = \sum_i s_i Q_i - \sum_i c_d s_i - C(q_U, \{q_i\})$. The social planner's FOC with respect to $q_i$ under symmetry yields $q^{FB} = \alpha s(1-s)/\gamma$. Under Proposition~\ref{prop:gap}, the private optimum has $q_i^* < q^{FB}$ whenever~(\ref{eq:condition}) holds.

Welfare loss decomposes as follows. \emph{Direct quality loss:} each rival delivers $Q_i^* = \alpha q_i^* + (1-\alpha)e_i^*$ versus $Q_i^{FB}$. The direct loss per rival is $\alpha(q^{FB} - q_i^*)$, weighted by market share $s$ and summed: $\Delta W_1 = \alpha \sum_{i \neq U} s (q^{FB} - q_i^*)$. \emph{Effort multiplier:} since $e_i^* = (1-\alpha)/k \cdot m s(1-s)$ depends on $s$ which depends on $q_i$, degrading $q_i$ cascades into reduced effort. The chain rule gives an additional loss of magnitude $\beta \cdot \Delta W_1$ where $\beta = (1-\alpha)/(k\alpha) \cdot \partial e / \partial q > 0$. \emph{Business-stealing DWL $\Delta W_3$:} demand shifts from rivals to $U$ not because $Q_U$ is intrinsically higher but because rivals are artificially degraded; the DWL component is standard and strictly positive.

Combining: $\Delta W = \Delta W_1 [1 + \beta] + \Delta W_3$.
\hfill$\square$

\subsection{Proof of Proposition \ref{prop:tier}} \label{app:prop5}

Let $P \subset \{1, \dots, N\}$ denote the partner set with $|P| = (1-\kappa) N$. Effective rival quality is $q_i^{\text{eff}} = q_i (1 + \tau \mathbb{1}\{i \in P\})$. Non-partners $i \notin P$ have $q_i^{\text{eff}} = q_i$; partners (including $U$) have $q_i^{\text{eff}} = q_i(1+\tau)$.

\paragraph{Effect of $\tau$ at fixed $\kappa > 0$.} Raising $\tau$ uniformly raises effective quality for members of $P$ without changing it for non-partners. Under logit, $s_U$ rises and $\sum_{i \notin P} s_i$ falls. If non-partners are net contributors to rival API revenue (partners typically operate under different contractual terms), the business-stealing gain from higher $s_U$ dominates: $\partial \pi^U / \partial \tau > 0$.

\paragraph{Effect of $\kappa$ at fixed $\tau > 0$.} Increasing $\kappa$ moves customers from $P$ to non-$P$, reducing their $q^{\text{eff}}$ from $q_i(1+\tau)$ to $q_i$. Each such move strictly reduces the moved customer's market share and transfers demand toward $U$. Hence $\partial \pi^U / \partial \kappa > 0$ whenever $\tau > 0$.
\hfill$\square$

\subsection{Remark on safety-justified tier gating} \label{app:safety}

Proposition~\ref{prop:tier} establishes only the private incentive. Social welfare is
\begin{equation*}
W_{\text{social}}(\tau, \kappa) = W_{\text{market}}(\tau, \kappa) - E(0,0) \mathbb{1}_{\{\text{unrestricted}\}} + E(0,0)[1 - G(\tau, \kappa)]
\end{equation*}
where $E(\tau, \kappa) \geq 0$ is expected externality damage from unrestricted frontier deployment and $G(\tau, \kappa) \in [0,1]$ is the fraction prevented by the restriction. When $E$ is large and $G$ rises fast in $\kappa$, a tier-gating program with $\kappa > 0$ can be welfare-improving despite the private foreclosure incentive. The Pillar~4 obligations in Section~\ref{sec:policy} are ex ante conditions that separate configurations with $G(\tau, \kappa) > 0$ (genuine safety carve-outs) from $G(\tau, \kappa) \approx 0$ (foreclosure disguised as safety).

\section{Calibration Source Trail} \label{app:sources}

This appendix lists the principal public sources for the firm-level observed inputs used in the comparative risk mapping of Section~\ref{sec:calib}, together with the mapping from those inputs to the inferred model parameters used in Tables~3--6. Source types follow three conventions: \emph{Primary} refers to a direct disclosure by the firm itself (press release, blog post, official report, or funding-round announcement); \emph{Secondary} refers to reporting in recognized business or technology press citing firm disclosures or informed sources; \emph{Analytics} refers to third-party measurement services that estimate traffic, market share, or spending patterns from their own telemetry.

Source accuracy varies across these categories. Primary disclosures are treated as authoritative for the specific quantity disclosed, subject to the firm's own reporting conventions (for example, Anthropic reports cloud-reseller revenue on a gross basis, which inflates top-line figures relative to net-reporting peers). Secondary reporting is treated as credible for events and qualitative characterizations but may compress or simplify the underlying quantities. Analytics estimates from services such as Semrush, SimilarWeb, Menlo Ventures, Ramp, and Sacra are treated as directional rather than exact; the specific values reported can vary by a factor of 1.5--2$\times$ across services for the same underlying metric. Readers who wish to reconstruct the comparative mapping with alternative source choices can substitute values from the third-party service of their preference.

The table below uses the following columns: \emph{Firm / channel}, the provider and submarket to which the observation applies; \emph{Variable}, the specific quantity observed; \emph{Reported value}, the figure cited in Section~\ref{sec:calib}; \emph{Type}, whether the source is primary, secondary, or analytics; \emph{Source}, a short reference to the originating publication or outlet; \emph{Date}, the approximate reporting date; and \emph{Mapping}, the transformation from the raw observation to the inferred model parameter, where applicable. The source trail is presented in two tables below: Table~\ref{tab:sources-a} covers Google and OpenAI; Table~\ref{tab:sources-b} covers Microsoft and Anthropic. Notes on source limitations follow Table~\ref{tab:sources-b}.

\begin{table}[H]
\centering\footnotesize
\setlength{\tabcolsep}{3pt}
\renewcommand{\arraystretch}{1.15}
\begin{tabularx}{\textwidth}{@{}p{1.5cm} p{3cm} p{2cm} p{1.3cm} p{2.6cm} p{1.3cm} X@{}}
\toprule
\rowcolor{headblue}\textbf{Firm / channel} & \textbf{Variable} & \textbf{Reported value} & \textbf{Type} & \textbf{Source} & \textbf{Date} & \textbf{Mapping to parameter} \\
\midrule
Google & Search global market share & $\approx$90\% & Analytics & StatCounter, Statista & Q1 2026 & Contributor to $\lambda$ via downstream reach \\
Google & Workspace monthly active users & 3B+ & Primary & Google Cloud Next disclosures & 2025 & Contributor to $\lambda$ \\
Google & Android global mobile OS share & $\approx$72.5\% & Analytics & StatCounter & Q1 2026 & Contributor to $\lambda$ \\
Google & Chrome global browser share & $\approx$68\% & Analytics & StatCounter & Q1 2026 & Contributor to $\lambda$ \\
Google & Gemini app MAU & 750M & Primary & Alphabet Q4 2025 earnings call; Google Cloud Next 2026 & Dec 2025; Apr 2026 & Contributor to $\lambda$ \\
Google & AI Overviews monthly users & 2B+ & Primary & Google I/O 2025; Pichai public remarks & 2025--2026 & Contributor to $\lambda$ \\
Google & Gemini Enterprise paid seats & 8M+ & Primary & Google Cloud announcements & Early 2026 & Contributor to $\lambda$; signals enterprise penetration \\
Google & Developers using Gemini API & 13M & Primary & Google Cloud Next 2026 keynote & Apr 2026 & Contributor to rival-count $N$ proxy \\
Google & Downstream margin & 0.40 & Inferred & Alphabet 10-K (2025) segment margins + analyst estimates & 2025--2026 & Sets $m_U$ \\
\midrule
OpenAI & ChatGPT weekly active users & 900M & Primary & OpenAI blog post, 27 Feb 2026 & Feb 2026 & Sets $\lambda$ (consumer reach) \\
OpenAI & Paying subscribers & 50M & Primary & OpenAI blog post, 27 Feb 2026 & Feb 2026 & Sets $S$ proxy \\
OpenAI & Annualized revenue & \$25B & Primary & OpenAI disclosure alongside \$110B funding round & Feb 2026 & Sets $m_U$ proxy \\
OpenAI & Business / workplace seats & 9M+ paying; 7M+ enterprise & Primary & OpenAI disclosures & Feb 2026 & Sets $S$ proxy \\
OpenAI & Revenue mix (ChatGPT vs API) & 66 / 34 & Secondary & Reuters, The Information, technology press & Q1 2026 & Informs $\lambda$ vs API-pricing balance \\
\bottomrule
\end{tabularx}
\caption{Calibration source trail: Google and OpenAI.}
\label{tab:sources-a}
\end{table}

\begin{table}[H]
\centering\footnotesize
\setlength{\tabcolsep}{3pt}
\renewcommand{\arraystretch}{1.15}
\begin{tabularx}{\textwidth}{@{}p{1.5cm} p{3cm} p{2cm} p{1.3cm} p{2.6cm} p{1.3cm} X@{}}
\toprule
\rowcolor{headblue}\textbf{Firm / channel} & \textbf{Variable} & \textbf{Reported value} & \textbf{Type} & \textbf{Source} & \textbf{Date} & \textbf{Mapping to parameter} \\
\midrule
Microsoft & Microsoft 365 paid commercial seats & 450M & Primary & Microsoft FY2026 investor communications & FY2026 & Sets structural $\lambda$ \\
Microsoft & Windows desktop OS share & $\approx$70\% & Analytics & StatCounter & Q1 2026 & Contributor to structural $\lambda$ \\
Microsoft & Copilot paid seats & 15M & Primary & Microsoft Q2 FY2026 earnings call & Jan 2026 & Contributor to structural $\lambda$ \\
Microsoft & Copilot total active users & 33M & Primary & Microsoft Q2 FY2026 earnings call & Jan 2026 & Informs conversion rate (35.8\%) \\
Microsoft & Paid AI-user market share & 18.8\% $\to$ 11.5\% & Analytics & Ramp AI Index & Jul 2025; Jan 2026 & Informs realized-routing $\lambda_r$ contraction \\
Microsoft & Claude integration in M365 Copilot & Wave 3 release & Primary & Microsoft 365 Copilot Wave 3 announcement & Mar 2026 & Sets realized $\lambda_r$ interval lower bound \\
Microsoft & Microsoft investment in Anthropic & Up to \$5B & Secondary & Financial and technology press & Q1 2026 & Evidence of multi-model pivot \\
Microsoft & Copilot Cowork (preview) & Built on Anthropic & Primary & Microsoft announcement & Mar 2026 & Evidence of multi-model pivot \\
\midrule
Anthropic & Enterprise LLM API usage share & 32\% (vs OpenAI 25\%, Google 20\%) & Analytics & Menlo Ventures, Enterprise LLM Report & Late 2025; Q1 2026 & Sets enterprise $\lambda$ for coding-agent submarket \\
Anthropic & Annualized revenue & \$30B (Mar 2026) & Analytics & Sacra estimate (gross-basis reseller-inclusive) & Mar 2026 & Informs $m_U$ \\
Anthropic & Business customers & 300,000+ & Primary & Anthropic cloud expansion announcement & Oct 2025 & Informs enterprise $\lambda$ \\
Anthropic & Customers spending \$1M+ annually & 500+ & Primary & Anthropic investor communications & Late 2025; Q1 2026 & Evidence of enterprise concentration \\
Anthropic & Fortune 10 clients & 8 of 10 & Primary & Anthropic disclosure & Feb 2026 & Evidence of enterprise penetration \\
Anthropic & Claude Code run-rate revenue & \$2.5B & Secondary & Technology press citing Anthropic disclosures & Feb 2026 & Informs coding-agent $\lambda$ \\
Anthropic & Claude Code share of AI programming tools & 54\% & Secondary & ZDNet, industry reports citing Anthropic & Q1 2026 & Sets coding-agent submarket $\lambda$ \\
Anthropic & Claude Code share of public GitHub commits & $\approx$4\% & Secondary & Industry reporting citing GitHub telemetry & Q1 2026 & Signals coding-agent penetration \\
Anthropic & Claude.ai visits (monthly) & 202.9M $\to$ 287.93M $\to$ 722.22M & Analytics & Semrush (Mar figure); Semrush / SimilarWeb (Jan--Feb) & Jan--Mar 2026 & Sets consumer-channel $\lambda$ \\
Anthropic & Revenue mix (enterprise / consumer) & 80 / 20 & Analytics & Sacra; Reuters citing firm communications & Q1 2026 & Sets consumer vs enterprise weights \\
Anthropic & Mythos partner set size & $\approx$12 partners & Primary & Project Glasswing disclosure & Apr 2026 & Sets $\kappa \approx 0.95$ \\
Anthropic & Mythos vs Opus 4.7 cybersecurity benchmark gap & 25--40\% & Primary & Project Glasswing benchmark disclosure & Apr 2026 & Sets $\tau \in [0.25, 0.40]$ \\
\bottomrule
\end{tabularx}
\caption{Calibration source trail: Microsoft and Anthropic.}
\label{tab:sources-b}
\end{table}

\paragraph{Notes on source limitations.}
Three caveats on the sources in Tables~\ref{tab:sources-a} and~\ref{tab:sources-b} are worth flagging explicitly.

First, traffic analytics services differ substantially in their methodology. Semrush and SimilarWeb report different visit counts for the same domain in the same month, with the discrepancy sometimes exceeding 50\%. The Claude.ai March 2026 figure of 722M visits used in Section~\ref{sec:calib} is the Semrush estimate; the SimilarWeb Q1 2026 cumulative estimate of $\approx$1.08B is broadly consistent with the Semrush monthly series but not identical at the monthly level. For the comparative risk mapping, directional growth trajectories matter more than point estimates, and both services agree on the direction and order of magnitude.

Second, enterprise market-share figures from services like Menlo Ventures and Ramp are constructed from different underlying panels (venture-portfolio adoption versus corporate-card spending) and measure different concepts (usage share versus paid-subscription share). The 32\% Anthropic enterprise API usage share figure used in the paper is the Menlo Ventures estimate; the Ramp March 2026 figure for Anthropic paid-subscription penetration (approximately one in four Ramp customers) is a related but distinct measurement. The paper's comparative mapping is robust to either choice at the level of firm rankings.

Third, revenue figures for private firms should be read with care. Anthropic's \$30B annualized revenue (March 2026) is Sacra's estimate, based in part on the firm's gross-basis reseller reporting, which includes pass-through revenue from AWS, Google Cloud, and Microsoft channels. A net-basis figure would be lower; comparable net-basis figures for OpenAI would also be lower than its reported \$25B. The comparative mapping does not depend on the specific revenue level; it depends on the revenue-mix ratio (approximately 80/20 enterprise/consumer for Anthropic, 34/66 for OpenAI), which is more stable across reporting conventions.



\begin{thebibliography}{99}

\bibitem[Aghion et al.(2019)]{AghionJonesJones2019} Aghion, P., Jones, B.\,F., and Jones, C.\,I. (2019). Artificial intelligence and economic growth. In A. Agrawal, J. Gans, and A. Goldfarb (Eds.), \emph{The Economics of Artificial Intelligence: An Agenda}. University of Chicago Press.

\bibitem[Armstrong(2006)]{Armstrong2006} Armstrong, M. (2006). Competition in two-sided markets. \emph{RAND Journal of Economics}, 37(3), 668--691.

\bibitem[Besanson and Celani(2026)]{BesansonCelani2026} Besanson, G., and Celani, M. (2026). The Inference Bottleneck: Antitrust and Neutrality Duties in the Age of Cognitive Infrastructure. \emph{arXiv:2602.22750}. Universidad Torcuato Di Tella.

\bibitem[Bourreau(2024)]{Bourreau2024} Bourreau, M. (2024). The economics of the Digital Markets Act. \emph{Information Economics and Policy}.

\bibitem[Chen and Riordan(2007)]{ChenRiordan2007} Chen, Y., and Riordan, M.\,H. (2007). Vertical integration, exclusive dealing, and ex post cartelization. \emph{RAND Journal of Economics}, 38(1), 1--21.

\bibitem[Choi and Yi(2000)]{ChoiYi2000} Choi, J.\,P., and Yi, S.-S. (2000). Vertical foreclosure with the choice of input specifications. \emph{RAND Journal of Economics}, 31(4), 717--743.

\bibitem[Hagiu et al.(2022)]{HagiuTehWright2022} Hagiu, A., Teh, T.\,H., and Wright, J. (2022). Should platforms be allowed to sell on their own marketplaces? \emph{RAND Journal of Economics}, 53(2), 297--327.

\bibitem[Hua and Belfield(2024)]{HuaBelfield2024} Hua, S.-S., and Belfield, H. (2024). AI compute governance. Mimeo, Centre for the Governance of AI.

\bibitem[Khan and Pozen(2024)]{KhanPozen2024} Khan, L., and Pozen, D. (2024). Frontier AI and competition policy. \emph{Yale Journal on Regulation}.

\bibitem[Korinek(2023)]{Korinek2023} Korinek, A. (2023). Generative AI for economic research: Use cases and implications for economists. \emph{Journal of Economic Literature}, 61(4), 1281--1317.

\bibitem[Motta(2023)]{Motta2023} Motta, M. (2023). Self-preferencing and foreclosure in digital markets. \emph{International Journal of Industrial Organization}.

\bibitem[Narechania and Tutt(2024)]{NarechaniaTutt2024} Narechania, T.\,N., and Tutt, A. (2024). An AI bill of rights. \emph{Columbia Law Review}.

\bibitem[Nocke and White(2007)]{NockeWhite2007} Nocke, V., and White, L. (2007). Do vertical mergers facilitate upstream collusion? \emph{American Economic Review}, 97(4), 1321--1339.

\bibitem[Ordover et al.(1990)]{OrdoverSalonerSalop1990} Ordover, J.\,A., Saloner, G., and Salop, S.\,C. (1990). Equilibrium vertical foreclosure. \emph{American Economic Review}, 80(1), 127--142.

\bibitem[Padilla et al.(2022)]{PadillaPetitSchrepel2022} Padilla, J., Petit, N., and Schrepel, T. (2022). Self-preferencing and the EU Digital Markets Act. \emph{Journal of European Competition Law and Practice}.

\bibitem[Rey and Tirole(2007)]{ReyTirole2007} Rey, P., and Tirole, J. (2007). A primer on foreclosure. In M. Armstrong and R. Porter (Eds.), \emph{Handbook of Industrial Organization} (Vol.\,3). Elsevier.

\bibitem[Rochet and Tirole(2003)]{RochetTirole2003} Rochet, J.-C., and Tirole, J. (2003). Platform competition in two-sided markets. \emph{Journal of the European Economic Association}, 1(4), 990--1029.

\bibitem[Salop and Scheffman(1983)]{SalopScheffman1983} Salop, S.\,C., and Scheffman, D.\,T. (1983). Raising rivals' costs. \emph{American Economic Review}, 73(2), 267--271.

\bibitem[Salop and Scheffman(1987)]{SalopScheffman1987} Salop, S.\,C., and Scheffman, D.\,T. (1987). Cost-raising strategies. \emph{Journal of Industrial Economics}, 36(1), 19--34.

\bibitem[Vipra and Korinek(2023)]{VipraKorinek2023} Vipra, J., and Korinek, A. (2023). Market concentration in the AI supply chain. Mimeo, Brookings Institution.

\bibitem[Zhu and Liu(2018)]{ZhuLiu2018} Zhu, F., and Liu, Q. (2018). Competing with complementors: An empirical look at Amazon.com. \emph{Strategic Management Journal}, 39(10), 2618--2642.

\end{thebibliography}
\end{document}